\title{Curved Holographic Optical Elements from a Geometric View Point}
\author[a,*]{Tobias Graf}
\affil[a]{Robert Bosch GmbH, Corporate Research and Advance Development, Robert-Bosch-Campus 1, 71272 Renningen, Germany}
\newtheorem{proposition}{Proposition}
\newtheorem{problem}{Problem}
\newtheorem{remark}{Remark}
\begin{document} 
\maketitle
%%%%%%%%%%%%%%%%%%% abstract %%%%%%%%%%%%%%%%
\begin{abstract}
	Motivated by recent developments in augmented reality display technology and notably smartglasses, we present a geometric framework to model the optical effects of deformations of planar holographic optical elements (HOE) into curved surfaces. In particular, we consider deformations into shapes which do not preserve the Gaussian curvature, such as deforming planar regions into sphere segments. 
\end{abstract}

% Include a list of up to six keywords after the abstract
\keywords{Holographic Optical Elements, Augmented Reality, k-Vector Closure, Coupled Wave Analysis, Computational Optics, Differential Geometry} % {optics, photonics, light, lasers, journal manuscripts, LaTeX template}

% Include email contact information for corresponding author
{\noindent \footnotesize\textbf{*} \linkable{tobias.graf3@de.bosch.com} }

\begin{spacing}{1}   % use double spacing for rest of manuscript

%%%%%%%%%%%%%%%%%%%%%%%%%%  body  %%%%%%%%%%%%%%%%%%%%%%%%%%

%% ------------
\section{Introduction}
\emph{Augmented reality} (AR) is the technology to augment a user's view of the real world with additional virtual objects. Ideally, this augmentation is to be achieved without obstructing the view of the real world while placing the virtual objects in the user's field of view (FoV). Well known examples of AR displays are \emph{head-up displays} (HUDs), \emph{helmet-mounted displays} (HMD), and \emph{head-worn displays} (HWD). Interestingly, to the best of the author's knowledge, head-up displays were also one of the first industrial applications of \emph{holographic optical elements} (HOE) \cite{close_holographic_1975}. HOEs offer multiple interesting design aspects, which make them suitable, if not necessary for highly transparent, small form-factor, unobstructive see-through AR displays, in particular HUDs, HMDs or HWDs. HOEs are also used as combiner elements in \emph{retina scanner displays} (RSDs) for smartglasses\cite{akutsu_compact_2019}  \cite{viirre_virtual_1998}, since they allow manipulation of the light emitted from a projection engine without affecting the real-world view in a see-trough AR system due to their wavelength selectivity.

As already indicated, in order to realize highly transparent, small form-factor, unobstructive see-through AR displays, optical engineers have moved beyond classical components such as mirrors (reflective elements) and lenses (refractive elements). Modern system architectures for AR-displays, e.g. Microsoft's first generation HoloLens \cite{kress_optical_2017}, often rely also on diffractive optical elements such as surface relief gratings, to manipulate light propagation in new ways. Moreover, a planar waveguide combiner with surface relief gratings for in- and out-coupling can also be combined with MEMS (micro-electro-mechanical-system) mirror based projection units \cite{ayras_near--eye_2010}, as in Microsoft's second generation HoloLens. However, diffractive optical elements based on surface relief structures affect light of all wavelengths and their performance may suffer from chromatic effects as well as undesirable higher diffraction orders. Therefore, holographic optical elements based on volume diffraction (also called thick holograms \cite{kogelnik_coupled-wave_1969}) which can reach high diffraction efficiencies in combination with high angle or wavelength selectivity have been investigated early on as a promising technology for AR displays\cite{close_holographic_1975}.

The interest in holographic optical elements (HOE) has recently increased even further, as new holographic materials have been developed \cite{kowalski_design_2016}. In recent years, improved photopolymers for holographic applications have undergone significant improvements and are now commercially available \cite{jurbergs_new_2009}. Due to the easier handling and storage they have become an attractive alternative to classical holographic materials such as dichromate gelatine. Moreover, modern photopolymers lend themselves to industrial replication processes by including the photopolymer in a foil stack of carrier and cover substrate material. The thin holographic foil can then be rolled up for storage. Handling during HOE manufacturing can also be simplified through automated equipment which can process high volumes of HOE e.g. in roll-to-roll replication processes\cite{bjelkhagen_mass_2017}. Thanks to the advancements of photopolymers in recent years, the optical designer's toolbox has been expanded even further. Photopolymers offer a holographic material which can be applied to planar and curved carrier substrates. Thus, the optical designer can employ the characteristic wavelength and angle selectivity of volume holograms in HOEs to control light propagation in AR systems. Due to the aforementioned characteristics, HOEs offer multiple interesting design aspects, which make them promising candidates for highly transparent, small form-factor, unobstructive see-through AR displays, in particular smartglasses. 

Moreover, HOEs recorded in photopolymers can be embedded in prescription lenses for eye-wear using high volume industrial processes compatible with existing eye-wear standards, processes, and equipment\cite{korner_casting_2018}. 

Therefore, RSDs based on light projection engines using MEMS mirrors in combination with holographic combiners are promising candidates for an upcoming generation of smartglasses\cite{akutsu_compact_2019, viirre_virtual_1998, kollin_optical_1995}.

This paper is motivated by the recent efforts to develop head-worn AR displays, in particular smartglasses, for consumer and industrial applications. The need to account for vision correction with prescription lenses is expected to be a particularly crucial aspect of smartglasses technology for consumer applications.

We present a geometric framework, which serves as a building block for simulation and design methods for a system architecture of smartglasses, especially RSDs. The primary focus in this paper is a modeling approach which can account for deformations of HOE after the recording process.  In particular, we consider deformations which do not preserve the Gaussian curvature\cite{carmo_differential_2018}. This important aspect distinguishes, among others, the present discussion from previously published work on curved HOE \cite{bang_curved_2019}, where the authors present simulation and experimental results for curved HOEs in the shape of cylinder segments, which  intrinsicially have Gaussian curvature zero since only one principal curvature does not vanish and the Gaussian curvature can be expressed of the product of the principal curvatures \cite{carmo_differential_2018}.  We would like to point out that the framework discussed here was developed independently and can be modified to include cylindrical and more general deformations as well by chosing suitable sets of moving frames. We plan to return to this in a separate investigation. The method of moving frames is introduced in Section \ref{sec:fw-prblm} to combine what we will refer to as as macroscopic (i.e. surface) geometry with microscopic (i.e. volume grating) geometry in a single modelling framework. Moving frames are known in mathematics for example in the study of curves and surfaces in classical differential geometry\cite{cartan_differential_2006, carmo_differential_2018} as well as in image analysis\cite{sapiro_geometric_2006}. In this paper, we develop a geometric framework which also uses fields of frames and is focused on the application in numerical algorithms for optical engineering of RSDs. We also would like to point out that in the present paper, we are not investigating the geometry of the wavefronts diffracted at the HOE as has been don in earlier work\cite{verboven_aberration_1986, peng_nonparaxial_1986} in order to derive aberration coefficients with respect to some reference wavefront, if the HOE is recorded on a curved substrate directly. Instead we focus on the question, how the optical characteristics of an HOE are affected, if its macroscopic geometry is changed after recording. 

\subsection{Statement of the Problems to be solved}
The two principal problems addressed in this paper are as follows:

\subsubsection{Forward Problem of curved HOE} \label{sssec:fw}
\begin{problem} \label{pblm:fw} %[Forward Problem of curved HOE]
	Given a planar holographic optical element (HOE) $\mathcal{H}$ and a prescribed in advance surface $\mathcal{S}$, what are the optical characteristics of the HOE $\tilde{\mathcal{H}}$ obtained after deforming $\mathcal{H}$ into the shape $\mathcal{S}$?     	
\end{problem}

The illustration in Figure \ref{fig:prblm-fw} is a visualization of Problem \ref{pblm:fw} in the case were a planar HOE is deformed after recording, e.g. into a segment of a sphere. The forward problem aims to predict the diffraction behavior of the deformed HOE knowing the properties of the recorded HOE. 

\begin{figure}
	\centering
	\includegraphics[scale = 0.8]{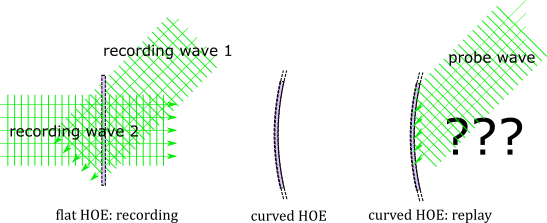} % width = \textwidth
	\caption{Forward problem of HOE deformation: After recording an HOE with two coherent recording waves (often referred to as reference and object waves in the holography literature), the macroscopic geometry is changed. The forward problem consists of the task to predict the diffraction of a probe wave at the deformed HOE.} \label{fig:prblm-fw}
\end{figure}

\subsubsection{Inverse Problem of curved HOE} \label{sssec:inv}
\begin{problem} \label{pblm:inv} %[Inverse Problem of curved HOE]
	Given a surface $\mathcal{S}$, prescribed in advance probe wave $\mathbf{E}_{p}$  (or familiy of reference wavefronts, e. g. for polychromatic systems) in combination with a prescribed in advance target wavefront $\mathbf{E}_{d}$ (or family of target wavefronts), find a holographic optical element (HOE) $\mathcal{H}$ such that the curved HOE resulting from reshaping  $\mathcal{H}$ into the geometry of $\mathcal{S}$ results in a HOE which transforms $\mathbf{E}_{p}$ into $\mathbf{E}_{d}$.	 
\end{problem}

In Figure \ref{fig:prblm-inv} the inverse problem, i.e. Problem \ref{pblm:inv}, is illustrated in the case were a planar HOE is deformed after recording, e.g. into a segment of a sphere. The deformed HOE is expected to have certain diffraction properties, e.g diffracting a prescribed probe wave into a prescribed diffracted wave. The goal of the inverse problem is to determine an undeformed HOE structure such that after deformation into the reconstruction geometry the deformed HOE shows the prescribed in advance diffraction behavior. 

\begin{figure}
	\centering
	\includegraphics[scale = 0.8]{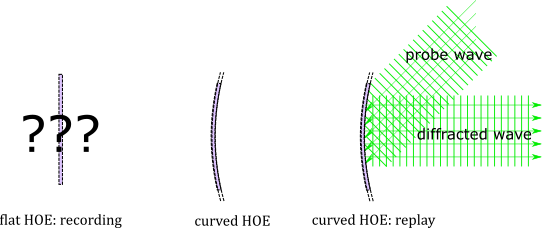}
	\caption{Inverse problem of HOE deformation: Given a prescribed in advance macroscopic recording and reconstruction geometry as well as a prescribed in advance diffraction behavior of the HOE during reconstruction (i.e. its micrsocopic structure), the inverse problem consists of the task to determine the microscopic geometry of the HOE in the recording geometry prior to deformation into the reconstruction geometry.} \label{fig:prblm-inv}
\end{figure}

\subsection{Outline of the paper}
The paper is organized as follows:
\begin{itemize}
	\item In Section \ref{sec:RSDandHOE} we briefly introduce the core system architecture of a retina scanner display (RSD) and discuss some important aspects of industrial replication of HOEs and the embedding of HOEs in ophtalmic lenses.
	\item In Section \ref{sec:holo} we give a cursory introduction to volume holography, starting with the example of a volume grating obtained by interference of two plane waves during the recording process. Based on the invesitgation of the volume grating we introduce the notion of a grating vector. With this background on volume gratings, we turn our attention to HOE obtained from the interference of spherical waves and define the HOE model which is used throughout this paper.
	\item In Section \ref{sec:fw-prblm} we define a space of admissible HOE deformations and investigate some of their properties to formalize Problem \ref{pblm:fw} and provide a geometric framework for a solution. We then turn our attention to Problem \ref{pblm:inv} using the previously developed methods. 
	%	\item In Section \ref{sec:inv-prblm} we turn our attention to Problem \ref{pblm:inv}. 
	\item In Section \ref{sec:applications}, we show results of numerical experiments using a basic Matlab implementation of the core principles of the methods developed in the previous sections. We also return to the retina scanner display as the motivating application and discuss the effects of mechanical deformations of  an HOE converting a divergent spherical wave into a convergent spherical wave.
	\item We conclude our presentation with a summary of our results and an outlook of on-going and future research in Section \ref{sec:conclusion}.
\end{itemize}

%% ------------
\section{System architecture and manufacturing considerations for HOE combiners in retina scanner displays} \label{sec:RSDandHOE}
In this section, we briefly introduce the basic system architecture of a retina scanner display (RSD) and discuss some aspects of industrial HOE replication and HOE embedding in ophtalmic lenses. This presentation provides the motivation for the following mathematical discussions and serves to clarify why certain assumptions are being made during the derivation of the method. 

\subsection{Fundamental system architecture of retina scanner displays}  \label{subsec:RSD}
We briefly recall the RSD architecture and the use of HOEs as combiners in such a system\cite{akutsu_compact_2019}. The conept of (virtual) retina scanner displays goes back at least to the 1990s\cite{viirre_virtual_1998, kollin_optical_1995}. However, only recent advances in MEMS technology, laser integration, holographic materials, holography, and lens casting have put the concept within reach of industrialization. Figure \ref{fig:RSDbasics} shows a schematic illustration of an RSD system. A projection unit, typically containing a laser module and a MEMS mirror module, scans the projected light across the lenses. HOEs attached to or embedded in the lenses diffract the light of the projection unit towards the user's eyes without perturbing the incident light from the surrounding environment due to wavelength selectivity of the HOEs\cite{kick_sequential_2018}.  

\begin{figure}
	\centering
	\includegraphics[scale = 1.0]{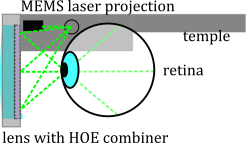}
	\caption{Schematic illustration of a retina scanner display (RSD). A holographic optical element (HOE) in the lens is used as a combiner to redirect light from a MEMS (micro-electro-mechanical-system) -mirror based projection unit integrated in the temple towards the user's eye. By modulating the light sources time sequentially depending on the scanning mirror position, local visual stimuli can be induced in the user's eye at the retina.}\label{fig:RSDbasics}
\end{figure}

\subsection{Industrial replication of holographic optical elements} \label{subsec:HOEreplic}
While it is possible to record an HOE on curved substrate, it may be advantageous or even necessary to record HOEs in a planar setting, even in cases where the application profits from curved HOE. This may for example be due to the manufacturing process such as a roll-to-roll replication process targeting mass production of HOE at industrial scales \cite{bjelkhagen_mass_2017}. The aforementioned process is based on the replication of a \emph{master} HOE located on a planar carrier substrate using single beam illumination of a photopolymer foil laminated onto the master. The photolpolymer foil is supplied from a roll and is laminated automatically to the master before replication and delaminated automatically and rolled up again afterwards, 

\subsection{Embedding HOE into prescription lenses} \label{subsec:HOEembed}
One motivation to investigate the deformation of curved HOE after recording is provided by the need to introduce prescription lenses for vision correction into a RSD. The compatibility with vision correction is expected to be an important aspect for smartglasses targeted at consumer applications. Moreover, there already exist industrial processes to embed HOE into polymer lenses\cite{korner_casting_2018} compatible with current standards in the ophtalmic industry. However, the process requires the HOEs to be shaped into sphere segments prior to embedding. Moreover, the casting material is cured with UV light and thus requires that the HOEs are already recorded and fixed in the photopolymer prior to embedding.

Aside from the workflow of the process presented in \cite{korner_casting_2018}, the thickness (and thus the weight) of the lens is another design aspect which favors curved HOE over planar ones. For a prescribed in advance lens diameter (in particular the relatively large diameters of the lens blanks compared to the final lens diameter after edging and fitting in the spectacles frame), the thickness of the lens is determined by the desired optical power and outer radii according to the lens maker equation\cite{hecht_optics_2002}

\begin{equation} \label{eq:lensmaker}
\frac{1}{f} = (n-1)\left( \frac{1}{R_1} - \frac{1}{R_1} + \frac{(n-1)d}{n R_1 R_2} \right)
\end{equation}

where $R_1$ and $R_2$ are the radii of curvature, $n$ the index of refraction of the lens material, and $d$ the center thickness of the lens. Note that we assume the radii in \eqref{eq:lensmaker} to be positive since we are primarily interested in convex-concave (or meniscus) lenses for vision correction. Moreover, we assume the lens to be embedded in air and thus use a refractive index of $n_0 = 1.0$ for the surrounding medium. Figure \ref{fig:embedding} shows in the left column two convex-concave lenses for vision correction, the top lens corresponding to a corrective lens for hyperopia, and the bottom lens is a negative lens to correct myopia. The central column illustrates the embedding of a planar HOE in the same type of lenses. We observe, that for a prescribed lens radius, the HOE plane introduces a lower bound for the lens thickness. Moreover, the embedding material thickness between HOE and the environment on the eye-side at the lens center is also limited by the overall lens thickness. In the right column, we see an HOE which was curved into a sphere segment prior to embedding. This allows a thinner lens compared to the planar HOE or an increased minimal thickness of the lens material for improved mechanical protection.

\begin{figure}
	\centering
	\includegraphics[scale = 1.0]{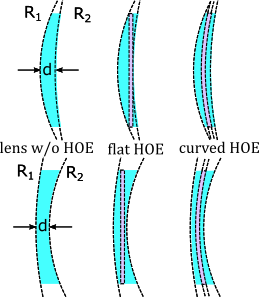}
	\caption{The illustrations show: meniscus lenses for vision correction (left column), with planar HOEs embedded (center column), and curved HOEs embedded (right column). Note that in case of ophtalmic lenses the world-side is to the left of the lens, the eye-side to the right.} \label{fig:embedding}
\end{figure}

%% ------------
\section{Fundamentals of holography applied to holographic optical elements} \label{sec:holo}

%% ------------
%\subsection{Microscopic geometry of HOE} \label{sec:microgeom}
We begin with an investigation of the microscopic structure, i.e. the index modulation within a phase hologram, of two classes of holographical optical elements. First, we consider an HOE constructed by interference of two plane waves. We will explicitely compute the interference pattern and motivate the concept of the so-called grating vector. Moreover, the plane wave HOE will later serve as a local approximation for more general HOEs. One such class of more general HOEs will be discussed next, when we consider HOEs recorded by interference of two spherical waves. One important observation is the fact that in contrast to the plane wave HOE the grating vectors of the spherical wave hologram are not all collinear.
%% ------------
\subsection{Example: HOE constructed from plane wave interference (volume grating)} \label{subsec:planeexpl}
We begin our disucssion with the following holographic recording setup: the output of a coherent light source (typically a laser) is collimated and the resulting beam is divided into two light paths. The two beams are considered here for simplicity to represent ideal plane waves. A planar substrate carrying a photopolymer foil is placed in the region where these two plane waves interfere. The intensity of the interference pattern is translated by chemical reactions into a refractive index modulation in the photopolymer and can be fixed using UV-light. If the resulting HOE is illuminated with one of the recording waves serving as a probe wave, the light diffracted at the HOE strucuture will propagate in the same direction as the second recording wave. The angles of incidence of the waves at the photopolymer with respect to the surface normal of the substrate and a reference point (e.g. the center of the future HOE) are typically used to describe the recording geometry of such plane wave converting HOEs.

Figure \ref{fig:planarHOE} illustrates the above situation for a planar photopolymer. The photopolymer is illuminated by the two interfering waves and a refractive index modulation in the material occurs. After this \emph{recording} step -- and typically an additional fixation step -- the HOE can be \emph{reconstructed} (or \emph{replayed}). For an ideal HOE, if the probe wave used to reconstruct the hologram corresponds to one of the recording waves or their complex conjugates, the HOE is said to be reconstructed in on-Bragg condition. This is typically where diffraction efficiency, i.e. the ratio of the power in the diffracted wave and the incident wave, is at its maximum. Any deviation from the on-Bragg condition, e.g. changes in the angle of incidence or the wavelength, will result in a different diffraction angle and a reduced diffraction efficiency\cite{kick_sequential_2018}.

\begin{figure}
	\centering
	\includegraphics[scale=0.8]{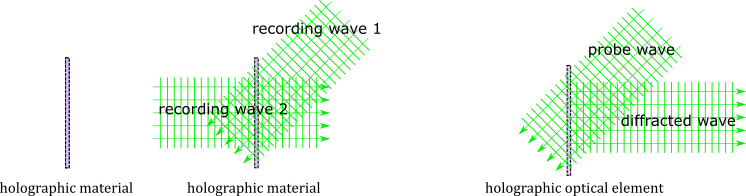} %scale=1.0  width = \textwidth
	\caption{Left: a planar photopolymer; Center: recording a volume grating via interference of two plane waves; Right: reconstruction in on-Bragg condition} \label{fig:planarHOE}
\end{figure}

We consider two plane waves $E_l = A_l(\mathbf{r}) \exp(\mathrm{i} \mathbf{k}_l \cdot \mathbf{r}), l = 1, 2$ with real amplitudes $A_l$.

The interference pattern of these two waves is given by 
\begin{equation} \label{eq:intensityvolgrating}
\begin{split}
\left| A_1(\mathbf{r}) \mathrm{e}^{\mathrm{i} \mathbf{k}_1 \cdot (\mathbf{r})} + A_2(\mathbf{r}) \mathrm{e}^{\mathrm{i} \mathbf{k}_2 \cdot (\mathbf{r})}\right|^2 = & \left| A_1 \right|^2 + \left| A_2 \right|^2 \\
& + 2A_1 A_2 \cos((\mathbf{k}_2 - \mathbf{k}_1)\cdot \mathbf{r})
\end{split}
\end{equation}

We assume thath the cosine modulated intensity pattern in \eqref{eq:intensityvolgrating} is tranferred linearly into a refractive index modulation. Note that $\cos((\mathbf{k}_2 - \mathbf{k}_1)\cdot \mathbf{r})$ is constant, if $(\mathbf{k}_2 - \mathbf{k}_1)\cdot \mathbf{r}$ is constant. Moreover, the condition
\begin{equation} \label{eq:BraggPlanes}
(\mathbf{k}_2 - \mathbf{k}_1)\cdot \mathbf{r} = \mathrm{const}
\end{equation}
is the representation of a plane (or familiy of parallel planes for varying constants). The normal vector is colinear to the vector
\begin{equation} \label{eq:kg}
\mathbf{k}_g = \mathbf{k}_2 - \mathbf{k}_1
\end{equation}
Since the cosine is $2\pi$--periodic, the next plane in direction of $\mathbf{k}_g$, where the cosine value is repeated, is located at a distance
\begin{equation} \label{eq:vgratingperiod}
\Lambda = \frac{2\pi}{|\mathbf{k}_g|
}
\end{equation}
The familiy of planes corresponding to fixed constant values $c_0 + l \Lambda, c_0 \in \mathbb{R}, l \in \mathbb{Z}$ in \eqref{eq:BraggPlanes}, are often refered to as \emph{Bragg planes}. They are all parallel and spaced equidistantly with the volume grating period $\Lambda$ as defined in \eqref{eq:vgratingperiod}.

Thus, the vector $\mathbf{k}_g$ encodes the orientation and the spacing of the Bragg planes and is called the \emph{grating vector}. 

Note that the grating vector is constant in this example and the resulting Bragg strucuture is periodic. Therefore, we will also refer to this type of HOE as a \emph{volume grating}. 

In the following, we will use the above observations to briefly introduce methods to determine the direction of propagation of a diffracted wave using vector formalisms. 

\begin{remark}
	Note that if we laminate the photopolymer on a curved substrate prior to the recording process, i. e. we record in curved holographic material as illustrated in Figure \ref{fig:curvedHOE}, then the Bragg structure is recorded in a curved HOE, however, the grating vectors are still all colinear. 
\end{remark}

\begin{remark}
	We have not considered the possibility of amplitude variations in the above discussion, to keep the presentation simple. Spatially varying amplitudes will however result in corresponding variations of the fringe contrasts in the interference and thus in the index modulation pattern\cite{uchida_calculation_1973}.
\end{remark}

\begin{figure}
	\centering
	\includegraphics[scale=0.8]{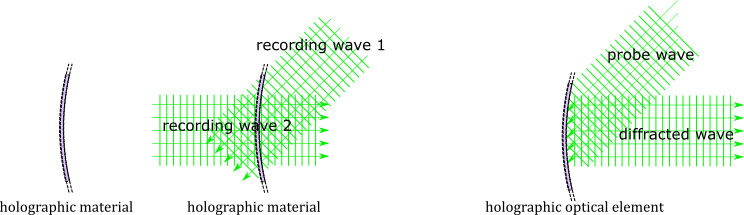}
	\caption{Illustration of the recording and on-Bragg replay process of a volume grating recorded in a curved photopolymer.} \label{fig:curvedHOE}
\end{figure}

%% ------------
\subsection{k-vector formalism and diffraction theories for volume gratings} \label{subsec:HOEtheory}
%{Modelling diffraction at volume holographic optical elements (VHOE)}
We briefly review some modelling aspects regarding the diffraction of light at a volume hologram, in particular the direction of propagation of the diffracted light (also refered to as the diffraction order), and the diffraction efficiency. At this point it is also important to point out that the geometric framework, which we are presenting in the following, is quite flexible and can be combined with 
different modelling approaches with varying complexity, depending on the physical accuracy required.

\begin{figure}
	\centering
	\includegraphics[scale=0.5]{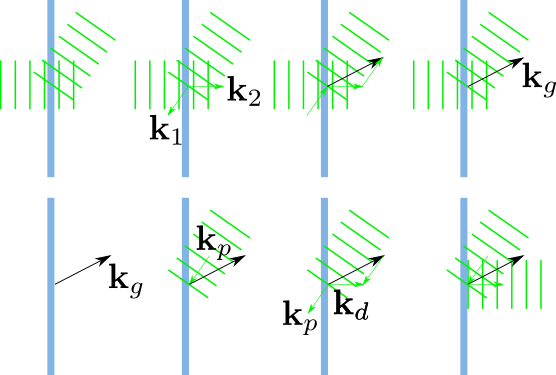}
	\caption{Local grating vector description based on plane wave approximation: The interference of plane waves, results in a periodic Bragg structure, which can be encoded in a single vector, the so-called grating vector.} \label{fig:kvector}
\end{figure}

A common mathematical model used in optics\cite{kogelnik_coupled-wave_1969} to compute the wavevector $\mathbf{k}_d$ of the diffrated wave from the wave vector $\mathbf{k}_p$ of the probe wave and the grating vector $\mathbf{k}_g$ is based on simple vector addition motivated by \eqref{eq:kg}, i. e.

\begin{equation} \label{eq:KVCM}
\mathbf{k}_d = \mathbf{k}_g + \mathbf{k}_p.
\end{equation}

This method is referred to as \emph{k-vector closure} by some authors. However, in the off-Bragg condition, the method leads to the physically invalid phenomenon that the length of the diffracted wave vector is different from the wave number of the probe wave. Therefore, the method can be improved\cite{uchida_calculation_1973} by enforcing that the following holds. 
\begin{equation} \label{eq:energycons}
\vert \mathbf{k}_d \vert = \vert \mathbf{k}_p \vert .
\end{equation}
Experimental validations, also of alternative theories, exist and have been discussed elsewhere \cite{prijatelj_far-off-bragg_2013}. In our discussion, we will typically assume that the HOE can be approximated locally by a volume grating and use a Cartesian coordinate system spanned by an orthonormal basis $\{\mathbf{t}_1, \mathbf{t}_2,  \mathbf{n} \}$ with the unit surface normal $\mathbf{n}$ of the HOE substrate as one coordinate axis. Then we can enforce \eqref{eq:energycons} by defining a new grating vector $\mathbf{k^{\prime}}_d$ setting
\begin{eqnarray}
\mathbf{k^{\prime}}_d \cdot \mathbf{t}_1 & = & (\mathbf{k}_g + \mathbf{k}_p)  \cdot \mathbf{t}_1 \label{eq:kdtangent1} \\
\mathbf{k^{\prime}}_d \cdot \mathbf{t}_2 & = & (\mathbf{k}_g + \mathbf{k}_p)  \cdot \mathbf{t}_2 \label{eq:kdtangent2} \\
\mathbf{k^{\prime}}_d \cdot \mathbf{n} & = & \sqrt{\vert \mathbf{k}_p \vert^2 - ((\mathbf{k}_g + \mathbf{k}_p) \cdot \mathbf{t}_1)^2 - ((\mathbf{k}_g + \mathbf{k}_p) \cdot \mathbf{t}_2)^2 } \label{eq:kdnormal}
\end{eqnarray}
provided that \eqref{eq:kdnormal} has a real solution corresponding physically to a propagating diffraction order.

Another alternative -- which is also helpful in visualizing volume diffraction in some instances -- suggests that the Bragg planes serve as partial mirrors or scatterers (due to the Fresenel conditions at boundary interfaces). Thus, in the on-Bragg condition, volume diffraction can be interpreted as reflection at the Bragg planes\cite{brotherton-ratcliffe_comparative_2014}.  

To fully predict the optical characteristics of an HOE, we need to know the direction in which the diffracted light propagates, as well as how much of the incident light is actually diffracted into this new direction. In this paper, we will limit the discussion to the simplifying assumption that there are only two propagating waves present in a volume grating. This is a common assumption in optics known as coupled wave theory\cite{kogelnik_coupled-wave_1969, uchida_calculation_1973}. An alternative approach is described by the parallel stacked mirror (PSM) methods\cite{brotherton-ratcliffe_comparative_2014}. For the interested reader, there exists a vast body of work ranging from introduction into diffraction theory for holography \cite{mihaylova_understanding_2013} to surveys of different diffraction theories and experimental validations, e.g. with the focus on the far off-Bragg regime \cite{prijatelj_far-off-bragg_2013}.  

For our discussion, we assume that an incident plane wave (the probe wave) represented by a wave vector $\mathbf{k}_p$ is diffracted at an HOE with the diffracted light propagating along the wavevector $\mathbf{k}_d$ also referred to as the first diffraction order. If the diffraction efficiency $\eta$ of the HOE is less than $1$, i.e. not all the light is diffracted into the first diffraction order, then a so-called zero diffraction order continuing along the direction of $\mathbf{k}_p$ will be present. Since for a retina scanner display we are primarily interested in phase holograms anyway, we will neglect possible absorption effects in the holographic material itself. Thus the contribution to the zero diffraction order is assumed to be $1-\eta$.

\begin{remark}
	It is noteworthy to point out that the geometric framework presented in the following is quite general and allows different levels of sophistication for the local HOE model. Since the grating vector locally describes the HOE geometry in full (assuming a volume grating approximation), we can even use rigorous methods, such as the \emph{Fourier modal method} (FMM), also known as \emph{rigorous coupled wave analysis} (RCWA)\cite{gaylord_analysis_1985}. Thus we can in principle use relatively simple scalar models based on two-wave coupled wave analysis as well as the full vectorial Maxwell equations in our modelling framework.
\end{remark}

%% ------------
\subsection{Example: HOE constructed from spherical wave interference} \label{subsec:sphereexpl}
We recall the complex representation of a spherical wave, i.e. an electromagnetic field where the surfaces of constant phase, i.e. the wavefronts, are contrentic spheres centered at a fixed point $\mathbf{r}_0$, to be
\begin{equation} \label{eq:spherewave}
E_{\mathbf{r}_0}(\mathbf{r},t) = A_{\mathbf{r}_0}(\mathbf{r}) \cdot e^{\mathrm{i}k|\mathbf{r} - \mathbf{r}_0]} \cdot e^{-\mathrm{i}\omega t}
\end{equation}
where the amplitude is given by 
\begin{equation} \label{eq:spherewaveamp}
A_{\mathbf{r}_0}(\mathbf{r}) = \frac{A_{0}}{\pi \cdot |\mathbf{r} - \mathbf{r}_0]}
\end{equation}
and 
\begin{equation}\label{eq:wavenumber}
k = \frac{2\pi}{\lambda}
\end{equation}
is called the \emph{wave number}.

We consider two spherical waves, one diverging from the point $\mathbf{r}_1$, and the second one converging towards the point  $\mathbf{r}_2$. Note that the fact that one wave is diverging and the other is converging is accounted for by changing the sign in the phase term arising from the scalar product of the wave vector and the spatial coordinate. Thus we have the recording waves 
\begin{equation}
E_{\mathbf{r}_1}(\mathbf{r}) = A_{\mathbf{r}_1}(\mathbf{r}) \cdot e^{\mathrm{i}k|\mathbf{r} - \mathbf{r}_1]}
\end{equation}
and
\begin{equation}
E_{\mathbf{r}_2}(\mathbf{r}) = A_{\mathbf{r}_2}(\mathbf{r}) \cdot e^{-\mathrm{i}k|\mathbf{r} - \mathbf{r}_2]}
\end{equation}
where the amplitudes and wavenumber are again defined as in \eqref{eq:spherewaveamp} and \eqref{eq:wavenumber}, respectively.

The situation is illustrated in Figure \ref{fig:sphereHOE} for the recording and reconstruction configuration of such an HOE.

\begin{figure}
	\centering
	\includegraphics[scale = 1.0]{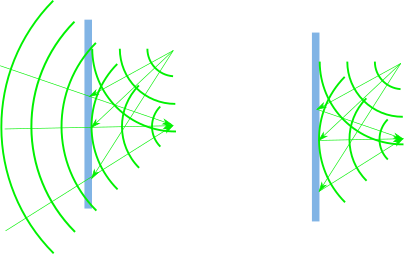}  %[width=\textwidth]
	\caption{Left: Hologram recording by two wave interference. Right: Wavefront reconstruction by diffraction at a holographic structure.} \label{fig:sphereHOE}
\end{figure}

The absolute value squared of interference pattern of these two spherical waves is given by
\begin{equation}
\begin{split}
\left| E_{\mathbf{r}_1}(\mathbf{r}) + E_{\mathbf{r}_2}(\mathbf{r}) \right|^2 
%= & \left| A_{\mathbf{r}_1}(\mathbf{r}) \right|^2 + \left| A_{\mathbf{r}_2}(\mathbf{r}) \right|^2 \\
%& + A_{\mathbf{r}_1}(\mathbf{r}) A_{\mathbf{r}_2}(\mathbf{r}) \left( e^{\mathrm{i}k|\mathbf{r} - \mathbf{r}_1]} \cdot e^{\mathrm{i}k|\mathbf{r} - \mathbf{r}_2]} + e^{-\mathrm{i}k|\mathbf{r} - \mathbf{r}_1]} \cdot e^{-\mathrm{i}k|\mathbf{r} - \mathbf{r}_2]} \right) \\
%= & \left| A_{\mathbf{r}_1}(\mathbf{r}) \right|^2 + \left| A_{\mathbf{r}_2}(\mathbf{r}) \right|^2 \\
%& + A_{\mathbf{r}_1}(\mathbf{r}) A_{\mathbf{r}_2}(\mathbf{r})  \left(  e^{-\mathrm{i}k\left(|\mathbf{r} - \mathbf{r}_2| + |\mathbf{r} - \mathbf{r}_1| \right)} +  e^{\mathrm{i}k\left(|\mathbf{r} - \mathbf{r}_2| + |\mathbf{r} - \mathbf{r}_1| \right)} \right) \\
= & \left| A_{\mathbf{r}_1}(\mathbf{r}) \right|^2 + \left| A_{\mathbf{r}_2}(\mathbf{r}) \right|^2 \\
& + A_{\mathbf{r}_1}(\mathbf{r}) A_{\mathbf{r}_2}(\mathbf{r})  e^{-\mathrm{i}k\left(|\mathbf{r} - \mathbf{r}_2| + |\mathbf{r} - \mathbf{r}_1| \right)}  \\
& + A_{\mathbf{r}_1}(\mathbf{r}) A_{\mathbf{r}_2}(\mathbf{r})   e^{\mathrm{i}k\left(|\mathbf{r} - \mathbf{r}_2| + |\mathbf{r} - \mathbf{r}_1| \right)} \\
= & \left| A_{\mathbf{r}_1}(\mathbf{r}) \right|^2 + \left| A_{\mathbf{r}_2}(\mathbf{r}) \right|^2 \\
& + A_{\mathbf{r}_1}(\mathbf{r}) A_{\mathbf{r}_2}(\mathbf{r}) \cos \left( k \left(|\mathbf{r} - \mathbf{r}_2| + |\mathbf{r} - \mathbf{r}_1| \right) \right)
\end{split}
\end{equation}
and in an ideal holographic material is proportional to the refractive index modulation.

We observe that the oscillating (i.e. the cosine) term is constant on sets where 
\begin{equation}
k \left(|\mathbf{r} - \mathbf{r}_2| + |\mathbf{r} - \mathbf{r}_1| \right) = const.
\end{equation} 

Note that an ellipsoid of revolution $E_{\mathbf{r}_1, \mathbf{r}_2}$ with focii $\mathbf{r}_1$ and $\mathbf{r}_2$ is characterized by the fact that for any point on the boundary $\mathbf{r} \in \partial E_{\mathbf{r}_1, \mathbf{r}_2}$  the sum of the distances from $\mathbf{r}$ to the two foci  is constant, i.e. there exists a real number $C > 0$ such that 
\begin{equation}
E_{\mathbf{r}_1, \mathbf{r}_2} = \left\{ \mathbf{r} \in \mathbb{R}^3: |\mathbf{r} - \mathbf{r}_2| + |\mathbf{r} - \mathbf{r}_1| \le C  \right\}.
\end{equation}

Thus, we make the following observation.

\begin{proposition}
	The Bragg structure of a holographic optical element corresponding to two point sources -- one wave diverging, one wave converging -- located at $\mathbf{r}_1$ and $\mathbf{r}_2$, respectively, consists of a spatial variation of the index of variation where loci of constant index of refraction are segments of the boundary of ellipsoids of revolution about the axis $\overline{\mathbf{r}_1 \mathbf{r}_2}$ with focii $\mathbf{r}_1$ and $\mathbf{r}_2$. 
\end{proposition}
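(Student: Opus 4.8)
The plan is to read the result directly off the two-spherical-wave intensity computation already carried out in the text, isolating the oscillatory term that governs the Bragg (fringe) structure. First I would recall that the refractive index modulation is taken proportional to the interference intensity, whose only spatially \emph{oscillating} contribution is the term $\cos\!\left(k\left(|\mathbf{r} - \mathbf{r}_2| + |\mathbf{r} - \mathbf{r}_1|\right)\right)$. The remaining pieces $|A_{\mathbf{r}_1}(\mathbf{r})|^2 + |A_{\mathbf{r}_2}(\mathbf{r})|^2$ and the prefactor $A_{\mathbf{r}_1}(\mathbf{r}) A_{\mathbf{r}_2}(\mathbf{r})$ vary only slowly, on the scale of $|\mathbf{r} - \mathbf{r}_i|$ rather than the recording wavelength, so they modulate the fringe contrast but not the position of the fringes themselves.

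Next I would identify the level sets of this oscillatory term. Since $k$ is a fixed positive constant, the cosine is constant exactly on the sets where $|\mathbf{r} - \mathbf{r}_2| + |\mathbf{r} - \mathbf{r}_1| = \mathrm{const}$. By the standard focal characterization of an ellipse, and by rotating about the line through the two points, of an ellipsoid of revolution, these sets are precisely the boundaries $\partial E_{\mathbf{r}_1, \mathbf{r}_2}$ of the confocal family introduced above, with foci $\mathbf{r}_1$ and $\mathbf{r}_2$. The rotational symmetry about the axis $\overline{\mathbf{r}_1 \mathbf{r}_2}$ is then immediate, because the defining quantity depends on $\mathbf{r}$ only through its two focal distances, which are invariant under any rotation fixing that axis.

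Finally I would note that the photopolymer occupies only a bounded recording region, so the intersection of each ellipsoidal level surface with that region is a bounded patch; hence the loci of constant index are \emph{segments} of ellipsoid boundaries rather than full closed surfaces. Assembling these observations yields the claim.

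I expect the only point requiring genuine care, rather than routine bookkeeping, to be the separation of the rapidly oscillating phase from the slowly varying amplitude envelope: strictly speaking the level sets of the \emph{total} intensity are not exactly ellipsoids, since the amplitude factors perturb them slightly. The proposition should therefore be read as a statement about the periodic fringe (Bragg) structure, consistent with the earlier remark that amplitude variation affects fringe contrast rather than fringe position. Making this identification precise, for instance by treating the amplitude as locally constant over a single grating period, is the substantive step; everything else is a direct application of the focal definition of the ellipsoid.
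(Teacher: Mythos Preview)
Your proposal is correct and follows essentially the same route as the paper: fix a constant, observe that the cosine term in the interference intensity is constant precisely on the set $\{\mathbf{r}:|\mathbf{r}-\mathbf{r}_1|+|\mathbf{r}-\mathbf{r}_2|=C/k\}$, and identify this as the boundary of an ellipsoid of revolution with foci $\mathbf{r}_1,\mathbf{r}_2$. Your additional care about separating the slowly varying amplitude envelope from the oscillatory phase is a legitimate refinement, but the paper's proof does not engage with it and simply treats the cosine term as defining the Bragg structure, consistent with the earlier plane-wave discussion.
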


Note that we will also refer to these loci of constant index of refraction as \emph{Bragg isosurfaces}.

\begin{proof}
	Let $C \in \mathbb{R}$, $C>0$ be constant. Then the oscillating cosine term in the interference pattern and thus the corresponding local index of refraction in the resulting volume Bragg structure is constant on the set
	\begin{equation}
	E = \left\{ \mathbf{r} \in \mathbb{R}^3: |\mathbf{r} - \mathbf{r}_2| + |\mathbf{r} - \mathbf{r}_1|  = \frac{C}{k} \right\}
	\end{equation}
	which describes the boundary of an ellipsoid of revolution with focii $\mathbf{r}_1$ and $\mathbf{r}_2$.
\end{proof}

From the above observations, we obtain immediately the following result.
%\subsection{Implications for grating k-space formalism}
\begin{proposition}
	We consider an HOE recorded with spherical waves -- one wave diverging, one wave converging -- originating from two point sources located at $\mathbf{r}_1$ and $\mathbf{r}_2$, respectively. At each point the local grating vector $\mathbf{k}_g$ is colinear to the normal vector of an ellipsoids of revolution about the axis $\overline{\mathbf{r}_1 \mathbf{r}_2}$ with focii $\mathbf{r}_1$ and $\mathbf{r}_2$.
\end{proposition}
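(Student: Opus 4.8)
The plan is to identify the local grating vector as the gradient of the phase of the interference fringe pattern and then invoke the elementary fact that a gradient is everywhere orthogonal to the level sets of the function it differentiates. In the plane wave example the grating vector arose as $\mathbf{k}_g = \mathbf{k}_2 - \mathbf{k}_1$, which is exactly the (constant) gradient of the phase $(\mathbf{k}_2 - \mathbf{k}_1)\cdot\mathbf{r}$ appearing in the argument of the cosine in \eqref{eq:intensityvolgrating}. I would take this as the definition of the local grating vector in the general case: writing the oscillating interference term as $\cos\Phi(\mathbf{r})$ with
\begin{equation}
\Phi(\mathbf{r}) = k\left( |\mathbf{r} - \mathbf{r}_1| + |\mathbf{r} - \mathbf{r}_2| \right),
\end{equation}
the local grating vector is $\mathbf{k}_g(\mathbf{r}) = \nabla\Phi(\mathbf{r})$.

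Next I would compute this gradient explicitly. Using $\nabla|\mathbf{r} - \mathbf{r}_i| = (\mathbf{r} - \mathbf{r}_i)/|\mathbf{r} - \mathbf{r}_i|$ for $i = 1,2$, one obtains
\begin{equation}
\mathbf{k}_g(\mathbf{r}) = k\left( \frac{\mathbf{r} - \mathbf{r}_1}{|\mathbf{r} - \mathbf{r}_1|} + \frac{\mathbf{r} - \mathbf{r}_2}{|\mathbf{r} - \mathbf{r}_2|} \right),
\end{equation}
the sum of the two unit vectors pointing from the foci to the field point, scaled by $k$. This expression is defined and nonzero everywhere except at the foci themselves and on the open segment of the axis $\overline{\mathbf{r}_1\mathbf{r}_2}$ between them, where the two unit vectors are antiparallel; these exceptional points lie inside the family of ellipsoids and away from the HOE region, so they may be excluded.

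Finally I would conclude. By the preceding proposition the Bragg isosurfaces are precisely the level sets $\{\Phi = \mathrm{const}\}$, which are the ellipsoids of revolution with foci $\mathbf{r}_1$ and $\mathbf{r}_2$ about the axis $\overline{\mathbf{r}_1\mathbf{r}_2}$. Since the gradient of a smooth function is orthogonal to its level sets, $\mathbf{k}_g(\mathbf{r}) = \nabla\Phi(\mathbf{r})$ is normal to the ellipsoid passing through $\mathbf{r}$, which is the claimed colinearity. As an independent check, the same conclusion follows from the classical reflection property of the ellipse: the two focal radii make equal angles with the tangent plane, so their angle bisector --- the sum of the two unit focal vectors appearing in $\mathbf{k}_g$ --- lies along the surface normal.

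The step I expect to require the most care is the first one, namely justifying that the local grating vector genuinely equals $\nabla\Phi$ rather than merely being suggested by analogy with the plane wave case. The clean way to make this rigorous is to linearize the phase about an arbitrary base point $\mathbf{r}_*$, writing $\Phi(\mathbf{r}) \approx \Phi(\mathbf{r}_*) + \nabla\Phi(\mathbf{r}_*)\cdot(\mathbf{r} - \mathbf{r}_*)$; the index modulation then locally takes the plane wave grating form $\cos(\mathbf{k}_g\cdot(\mathbf{r} - \mathbf{r}_*) + \mathrm{const})$ with $\mathbf{k}_g = \nabla\Phi(\mathbf{r}_*)$, which is exactly the volume grating local approximation underlying the $k$-vector formalism of Section \ref{subsec:HOEtheory}. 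Everything after this identification is routine differential calculus.
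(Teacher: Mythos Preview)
Your proposal is correct and is precisely the argument the paper has in mind: the paper offers no separate proof for this proposition, stating only that it ``follows immediately from the above observations'' (i.e.\ from the preceding proposition identifying the Bragg isosurfaces as confocal ellipsoids), and your gradient-of-phase justification together with the standard fact that $\nabla\Phi$ is normal to the level sets $\{\Phi=\mathrm{const}\}$ is exactly the content being left implicit. Your explicit linearization step and the alternative check via the focal reflection property are welcome additions that the paper omits.
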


We point out here, that we will use planar volume gratings as local approximations for the curved Bragg isosurfaces in the remainder of this paper. This approximation is expected to be justified if the distances of construction points of the HOE to the HOE substrate are suffciently large, such that the local curvature of the curved Bragg isosurfaces are large enough to allow a local approximation by the tangent plane. It is noteworthy that the use of curved geometric entitites such as ellipsoids may prove useful for a system analysis, when the HOE properties have to be modelled not only pointwise, but for example in an open disc around a reference point. Interestingly, similar techniques, where curved geometric surfaces have been used to replace tangent planes as local approximations, have been employed already in the analysis of freeform reflector and lens design problems\cite{kochengin_determination_1997, graf_optimal_2012}. The associated partial differential equations are still attracting researchers' attention both from an analytical\cite{trudinger_local_2014, trudinger_local_2020} as well as from a numerical\cite{de_leo_numerical_2017, doskolovich_optimal_2019, romijn_inverse_2020, mingazov_use_2020} point of view. We plan to return to the topic of supporting quadric methods for HOE in a separate investigation. 

\subsection{Modelling HOE via grating vevtor fields on embedded surfaces}
Although volume holograms are also referred to as \emph{thick} holograms in the optics literature, their thickness is to be judged with respect to the wavelength of light. Thus, a phototpolymer of 10 to 20 micron can already contain an optically thick hologram. Since the lateral dimensions are typically in the range of tens of millimeters, e.g. for lenses for smartglasses, the thickness of the photopolymer is assumed to be negligible from a macroscopic geometry point of view. This leads to the following geometric model for an HOE.

\begin{definition}[HOE model and  grating vector field] \label{def:HOEmodel}
	An HOE model consists of an embedded surface $\mathcal{H}$ in $\mathbb{R}^3$ and an associated grating vector field 
	\begin{equation}
	\mathbf{k}_g: \mathcal{H} \to \mathcal{H} \times \mathbb{R}^3, \mathbf{r} \mapsto \mathbf{k}_g(\mathbf{r})
	\end{equation}
\end{definition}

%% ------------
\section{A formulation of the forward and inverse problem for curved HOE in terms of moving frames and grating vector fields} \label{sec:fw-prblm}

In the previous section, we have developed a framework to locally model an HOE as a volume grating using the grating vector formalism from holography and combining it with the geometric concept of a vector field. We will develop in this section a mathematical framework introducing another differential geometric concept, namely moving frames, to account for deformations of the HOE geometry after recording. 

%% ------------
\subsection{Basic assumptions and outline of the method}

For the sake of clearity we make the following assumptions:

\begin{enumerate}
	\item HOEs are modelled as embedded two-dimensional surfaces in $\mathbf{R}^3$, typically as the graph of a sufficiently smooth function over a fixed reference plane, such as the $(x,y)$-plane. Note that convex functions are in any case twice differentable almost everywhere\cite{busemann_convex_2013}.
	%\footnote{Possible reference: Busemann Convex Surfaces? Rademacher's Theorem: convex implies $\mathcal{C}^1$; Alexandrov's Theorem: convex implies $\mathcal{C}^2$ a.e.} \footnote{More potential references: 1) Niculescu, Persson, Convex Functions and their Applications, Springer; 2) Villani, Optimal Transport Old and New, Springer (eBook aus Ph.D.-Tagen?)} 
	Thus, we consider primarily functions which are at least twice differentiable, i.e. of class $\mathcal{C}^2$.
	\item The physical thickness of the photopolymer is included in the local model used to compute the diffraction efficiencies and neglected in the macroscopic geometry of the HOE surface otherwise.
	\item The planar HOE is contained in the $(x,y)$-plane of a Cartesian coordiante system $(x,y,z)$ centered at the origin $\mathcal{O} = (0,0,0)$.
	\item A parametrization of the curved HOE as an embedded surface in $\mathbb{R}^3$ is given as the graph of a convex function over the $(x,y)$-plane. \label{assume:convexity}
	\item The curved HOE is a subset of a surface that is rotationally symmetric about the $z$-axis. \label{assume:cylsymmetry}      
\end{enumerate}

\begin{remark}
	Assumptions \ref{assume:convexity} and \ref{assume:cylsymmetry} allow for generalizations -- as mentioned previously -- and are motivated by the application at hand which fosuses on embedding spherically deformed HOE into ophtalmic leneses.
\end{remark}

In the following, we develop a framework to formulate Problems \ref{pblm:fw} and \ref{pblm:inv} more rigorously and establish the tools for an analytic and numerical analysis of curved HOE as well as the basis for a design method to precompensate deterministic deformations of a planar HOE during the recording process.

The core idea of our method is as follows:
\begin{enumerate}
	\item We describe the inner structure of the HOE,  i.e. the Bragg-surfaces using a grating vector model to encode the local orientation and spacing of the Bragg surfaces, cf. Definition \ref{def:HOEmodel}. We will also refer to this inner structure as the microscopic geometry of the HOE,
	\item We relate the inner structure of the HOE to the macroscopic geometry of the surface associated with the HOE using local coordinate systems which we derive from moving frames. Each frame is formed by an orthorgonal set of vectors, one being the local surface normal, the other two being tangent vectors to be chosen in a suitable manner. 
	\item At first order, deformation of the HOE surface $\mathcal{H}$ into the HOE surface $\tilde{\mathcal{H}}$ corresponds to a change of the unit tangent vectors and the unit surface normal vector $\{\mathbf{t}_1, \mathbf{t}_2,  \mathbf{n} \}$ %$\mathbf{e}_{\top 1}, \mathbf{e}_{\top 2}, \mathbf{e}_{\bot}$
	into a new local frame $\{\tilde{\mathbf{t}}_1, \tilde{\mathbf{t}}_2,  \tilde{\mathbf{n}} \}$.
	%$\tilde{\mathbf{e}}_{\top 1}, \tilde{\mathbf{e}}_{\top 2}, \tilde{\mathbf{e}}_{\bot}$. 
	\item We assume a one-to-one coresspondence between points $\mathbf{r} \in \mathcal{H}$ and $\tilde{\mathbf{r}} \in \tilde{\mathcal{H}}$. If $\tilde{\mathcal{H}}$ is a graph of a function over the $(x,y)$-plane this condition is for example fulfilled by the orthorgonal projection along the $z$-axis, other possible projections will be discussed below.
	
	\item Thus, the grating vectors of the deformed HOE are obtained from the original grating vector 
	\begin{equation}
	\mathbf{k}_g (\mathbf{r}) = g_1(\mathbf{r}) \mathbf{t}_1(\mathbf{r})  + g_2(\mathbf{r}) \mathbf{t}_2(\mathbf{r}) + g_3(\mathbf{r})\mathbf{n}(\mathbf{r}) 
	\end{equation}
	as the vector in the frame $\{\tilde{\mathbf{t}}_1, \tilde{\mathbf{t}}_2,  \tilde{\mathbf{n}} \}$ %$\tilde{\mathbf{e}}_{\top 1}, \tilde{\mathbf{e}}_{\top 2}, \tilde{\mathbf{e}}_{\bot}$ 
	with the same coordinate values as $\mathbf{k}_g$.
	\begin{equation}
	\tilde{\mathbf{k}}_g (\tilde{\mathbf{r}}(\mathbf{r})) = g_1 (\mathbf{r}) \tilde{\mathbf{t}}_1(\tilde{\mathbf{r}}(\mathbf{r}))  + g_2 (\mathbf{r}) \tilde{\mathbf{t}}_2(\tilde{\mathbf{r}}(\mathbf{r})) + g_3 (\mathbf{r}) \tilde{\mathbf{n}}(\tilde{\mathbf{r}}(\mathbf{r})) 
	\end{equation} 	
\end{enumerate}
In essence, we use a point-wise correspondence between local frames to transform one vector field $\mathbf{k}_g$ on a surface $\mathcal{H}$ into another vector field $\tilde{\mathbf{k}}_g$ on a surface $\tilde{\mathcal{H}}$. The vector fields $\mathbf{k}_g$ and $\tilde{\mathbf{k}}_g$ serve to model locally the optical properties of the HOEs $\mathcal{H}$ and $\tilde{\mathcal{H}}$, respectively. The general idea of the method is illustrated graphically in Figure \ref{fig:kvecdef}.

\begin{figure}
	\centering
	\includegraphics[scale=0.5]{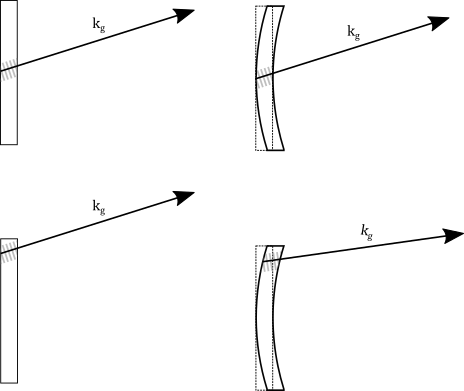}
	\caption{Accounting for influence of changes of the macroscopic geometry of an HOE via vector field transformations. The HOE is approximated locally by a volume grating with grating vector $\mathbf{k}_g$.} \label{fig:kvecdef} 
\end{figure}

%% ------------
\subsection{Forward problem for convex graphs with cylindrical symmetry} \label{subsec:fw-prblm}

\begin{definition}[Central projections]\label{def:centralproj}
	Let $D \subset \mathbb{R}^2$ be a bounded domain and  $\Gamma_{\tilde{h}}$ be the graph of a positive, convex function $\tilde{h}: D \to \mathbb{R}, (x,y) \mapsto \tilde{h}(x,y)$. Moreover, assume that $\Gamma_{\tilde{h}}$ is contained in a set which is rotationally symmetric about the $z$-axis. We define a set of transformations $\mathcal{T}_{c}$  from the $(x,y)$-plane onto $\Gamma_{\tilde{h}}$, whose elements have the following form.
	\begin{equation}
	T_{\mathbf{C}}:\mathbb{R}^2 \times {0} \to \Gamma_{\tilde{h}}, \mathbf{r} \mapsto \overline{\mathbf{C} \mathbf{r}} \cap \Gamma_{\tilde{h}}
	\end{equation}
	where $\mathbf{C}$ is a point on the $z$-axis called the center of projection, and $\overline{\mathbf{C} \mathbf{r}}$ defines the segment joining the center of projection $\mathbf{C}$ and the point $ \mathbf{r} \in \mathbb{R}^2 \times {0}$
\end{definition}

\begin{remark} \label{rmk:orthoproj}
	Note that the orthorgonal projection along the $z$-axis, i.e. 
	\begin{equation}
	T_{\mathbf{\infty}}:\mathbb{R}^2 \times {0} \to \Gamma_{\tilde{h}}, (x,y,0) \mapsto (x,y, \tilde{h})
	\end{equation}
	can be considered a limiting case for $\mathbf{C} = (0, 0, C)$ as $C \to \infty$. 
\end{remark}

\begin{definition}[Admissible surface transformations] \label{def:admproj}
	The set of admissible projections $\mathcal{T}_{adm}$ is defined as
	\begin{equation}
	\mathcal{T}_{a} = \mathcal{T}_{c} \cup \{ T_{\mathbf{\infty}} \}
	\end{equation}
\end{definition}

\begin{proposition} \label{prop:1to1}
	Let $T_\mathbf{C} \in \mathcal{T}_{a}$ and consider  the preimage of $\tilde{\mathcal{H}}$ denoted
	\begin{equation}
	D_\mathbf{C}  = {T_\mathbf{C}}^{-1}(\tilde{\mathcal{H}})
	\end{equation}
	Then the projection 
	\begin{equation}
	T_\mathbf{C}: D_\mathbf{C} \to \tilde{\mathcal{H}}
	\end{equation}
	is bijective.
\end{proposition}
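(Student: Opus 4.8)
The plan is to split the set $\mathcal{T}_a$ into the genuine central projections $T_{\mathbf{C}}$ and the limiting orthogonal projection $T_{\mathbf{\infty}}$, and to reduce bijectivity to three claims: that $T_{\mathbf{C}}$ is single-valued (each projection segment meets $\Gamma_{\tilde{h}}$ in exactly one point), that it is injective, and that it is surjective onto $\tilde{\mathcal{H}}$. For $T_{\mathbf{\infty}}$ all three are immediate, since $(x,y,0) \mapsto (x,y,\tilde{h}(x,y))$ is literally the graphing map with inverse the orthogonal projection back to the plane. So the substance lies in the central case, where I take $\mathbf{C} = (0,0,C)$ on the $z$-axis, positioned above the surface (as guaranteed by the admissibility setup together with the rotational symmetry and the positivity of $\tilde{h}$), and $\tilde{\mathbf{r}} \in \tilde{\mathcal{H}} \subseteq \Gamma_{\tilde{h}}$.

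First I would establish single-valuedness. Writing $\mathbf{r} = (x_0, y_0, 0)$, I parametrize the segment $\overline{\mathbf{C}\mathbf{r}}$ by $t \mapsto (t x_0, t y_0, (1-t)C)$ for $t \in [0,1]$, so that such a point lies on $\Gamma_{\tilde{h}}$ precisely when $\phi(t) := (1-t)C - \tilde{h}(t x_0, t y_0)$ vanishes. The decisive observation is that $\phi$ is concave on $[0,1]$: the term $(1-t)C$ is affine, while $t \mapsto \tilde{h}(t x_0, t y_0)$ is the composition of the convex $\tilde{h}$ with an affine map, hence convex, so its negative is concave. Since $\tilde{h} > 0$ one has $\phi(1) = -\tilde{h}(x_0,y_0) < 0$, and for the points $\mathbf{r}$ that actually project onto the surface $\phi(0) = C - \tilde{h}(0,0) > 0$. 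For a concave function both $\{\phi \ge 0\}$ and $\{\phi > 0\}$ are intervals; together with $\phi(0) > 0 > \phi(1)$ this forces $\{\phi \ge 0\} = [0,b]$ and $\{\phi = 0\} = \{b\}$ for a single $b \in (0,1)$, because a concave function cannot be positive, then identically zero on a subinterval, then negative — its one-sided slopes would have to increase. Hence the segment meets $\Gamma_{\tilde{h}}$ in exactly one point and $T_{\mathbf{C}}$ is a well-defined map.

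Injectivity is then purely projective and uses no convexity: if $\mathbf{r}_1 \ne \mathbf{r}_2$ lie in the plane, the lines $\overline{\mathbf{C}\mathbf{r}_1}$ and $\overline{\mathbf{C}\mathbf{r}_2}$ are distinct lines through the common point $\mathbf{C} \notin \{z = 0\}$, so they meet only at $\mathbf{C}$; since the image points lie on $\Gamma_{\tilde{h}}$ and are therefore different from $\mathbf{C}$, we cannot have $T_{\mathbf{C}}(\mathbf{r}_1) = T_{\mathbf{C}}(\mathbf{r}_2)$. For surjectivity onto $\tilde{\mathcal{H}}$, I take any $\tilde{\mathbf{r}} \in \tilde{\mathcal{H}}$; its $z$-coordinate lies strictly between $0$ and $C$, so the line through $\mathbf{C}$ and $\tilde{\mathbf{r}}$ is transversal to $\{z = 0\}$ and meets it in a unique point $\mathbf{r}$, with $\tilde{\mathbf{r}}$ lying on the segment $\overline{\mathbf{C}\mathbf{r}}$. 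By the uniqueness established above, $\tilde{\mathbf{r}}$ is the sole intersection of that segment with $\Gamma_{\tilde{h}}$, i.e. $T_{\mathbf{C}}(\mathbf{r}) = \tilde{\mathbf{r}}$, and $\mathbf{r} \in D_{\mathbf{C}} = T_{\mathbf{C}}^{-1}(\tilde{\mathcal{H}})$ by construction, which completes surjectivity and hence bijectivity.

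The step I expect to be the crux is the single-valuedness (at-most-once) claim, since this is the only place convexity is genuinely needed and the only place where the geometric hypotheses — convex graph, center on the axis above a rotationally symmetric surface — must be used to pin down the sign pattern $\phi(0) > 0 > \phi(1)$ and to exclude a zero-interval for the concave $\phi$. Everything downstream, both injectivity and surjectivity, is elementary incidence geometry of lines through a fixed off-plane point once single-valuedness is in hand.
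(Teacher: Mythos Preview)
Your argument is correct and, in fact, more complete than the paper's. The paper's proof only treats single-valuedness: it supposes one planar point is sent to two graph points $\mathbf{P}_1,\mathbf{P}_2$, restricts $\tilde{h}$ to the radial line through $(x,y,0)$ (using that $\mathbf{C}$ lies on the $z$-axis), and then argues via an epigraph inclusion that the second intersection forces an inequality contradicting convexity together with rotational symmetry. Injectivity and surjectivity are not addressed there at all.

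Your route differs in two respects. First, you package single-valuedness as the statement that the concave scalar function $\phi(t)=(1-t)C-\tilde{h}(tx_0,ty_0)$ has a unique zero on $[0,1]$; this uses only convexity of $\tilde{h}$ (restriction to a line), not rotational symmetry, so your argument is in effect more general than the paper's and would apply to any convex graph with $\mathbf{C}$ above the vertex. Second, you separate out injectivity (pure incidence geometry of a pencil of lines through an off-plane point) and surjectivity (extend the line $\overline{\mathbf{C}\tilde{\mathbf{r}}}$ to the plane and invoke your uniqueness step), which the paper leaves implicit. The only place where you lean on the same unstated hypothesis as the paper is the positioning of $\mathbf{C}$ strictly above $\tilde{\mathcal{H}}$, needed both for $\phi(0)>0$ and for your surjectivity step; the paper simply assumes $\mathbf{C}$ lies in the upper half-space, so you are not assuming anything the paper does not.
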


\begin{proof}
	Let $T_\mathbf{C} \in \mathcal{T}_{a}$. Suppose that $\mathbf{C}$ is contained in the upper half-space. Suppose there is a point $(x,y,0)$ which is mapped to two points $\mathbf{P}_1 = (x^{\prime}, y^{\prime}, \tilde{h}(x^{\prime},y)^{\prime})$ and $\mathbf{P}_2 =(x^{\prime \prime}, y^{\prime \prime}, \tilde{h}(x^{\prime \prime},y)^{\prime \prime})$ on $\Gamma_{\tilde{h}}$. 
	
	We consider the projection of the line $\overline{\mathbf{P}_1 \mathbf{P}_2}$ into the $(x,y)$-plane. Note that this is a line containing the coordinate origin $(0, 0, 0)$. Therefore we can restrict $\tilde{h}$ to the line $\overline{(0,0,0), (x,y,0)}$ and parametrize it as a function of a single parameter, i.e. we have a function $\tilde{h}(s\mathrm{e}^{i \phi(x,y,0)})$ in a single variable $s$, in radial diraction from $(0,0,0)$ to $(x,y,0)$. Here, $\phi(x,y,0)$ denots the angle of the point $(x,y,0)$ in polar coordinates. 
	Morevover, the function is convex, and therefore, the segment joining $\mathbf{P}_1$ and $\mathbf{C}$ is contained in the epigraph of $\tilde{h}(r)$. Since $\mathbf{P}_1$ is also contained in this segment, we conclude that $ \tilde{h}(x^{\prime \prime},y^{\prime \prime}) > \tilde{h}(x^{\prime},y^{\prime})$. This is in contradiction to the assumptions that $\Gamma_{\tilde{h}}$ is convex and rotationally symmetric about the $z$-axis. 
\end{proof}

Next, we will introduce the notion of moving frames to define local coordinate systems associated with the macroscopic HOE geometry. 

\begin{definition}[Moving Frames] \label{def:movframes} 
	Let $\tilde{h}$ be a function whose graph $\Gamma_{\tilde{h}}$ contains an HOE $\tilde{\mathcal{H}}$. We consider the family of curves defined by 
	\begin{equation} \label{eq:framegenerators}
	c_{\phi}(s) = \{ s\mathrm{e}^{\mathrm{i} \phi} \} \times \{ \tilde{h}(s\mathrm{e}^{\mathrm{i} \phi}) \}
	\end{equation}
	where $s \in (0, \infty)$ and $\phi \in [0, 2\pi)$ denote polar coordinates in the $(x,y)$-plane.
	For a fixed $\phi \in [0, 2\pi)$ we define moving frames by the unit vectors
	\begin{eqnarray}
	\mathbf{t}_{c_\phi}(s) & = & \frac{\frac{\mathrm{d}}{\mathrm{d}s} c_{\phi}(s)}{\vert \frac{\mathrm{d}}{\mathrm{d}s} c_{\phi}(s) \vert} \label{eq:frametangent} \\
	\mathbf{b}_{c_\phi}(s) & = & \mathbf{t}_{c_\phi}(s) \times \mathbf{n}_{c_\phi}(s) \label{eq:framebinormal} \\
	\mathbf{n}_{c_\phi}(s) & = & \frac{\left( \partial_x \tilde{h}(s \mathrm{e}^{\mathrm{i} \phi}), \partial_y \tilde{h}(s \mathrm{e}^{\mathrm{i} \phi}), -1 \right)}{\vert \left( \partial_x \tilde{h}(s \mathrm{e}^{\mathrm{i} \phi}), \partial_y \tilde{h}(s \mathrm{e}^{\mathrm{i} \phi}), -1 \right) \vert} \label{eq:framenormal}
	\end{eqnarray}
\end{definition}

\begin{remark}
	With the above Definition \ref{def:movframes}, we have introduced a family of curves motivated by the underlying symmetry which we assume in the application problem. This family of curves serves to define local orthonormal bases across the HOE surface. Thus, we have introduced a field of frames\cite{cartan_differential_2006, sapiro_geometric_2006}. In the present derivation, we have not explicitly parametrized the curves by arclength. However, note that we assume the HOE surface to be bounded, i.e. there exists a compact set containing the parameter domain. Therefor, we can parametrize the curves \eqref{eq:framegenerators} by arclength. With this parametrization, equations \eqref{eq:frametangent}, \eqref{eq:framebinormal}, and \eqref{eq:framenormal} correspond to the \emph{Darboux frames} of the curves \eqref{eq:framegenerators}. 	
\end{remark}

Using the moving frames from Definition \ref{def:movframes}, we are now in the position to establish a formalism to account for changes in the microscpic geometry of an HOE due to a macroscopic deformation.

\begin{proposition}[Induced transformation of grating vector fields via moving frames] \label{prop:inducedframes}
	Let $T_\mathbf{C} \in \mathcal{T}_{adm}$. Consider a planar HOE 
	$$(\mathcal{H}, \mathbf{k}_g)$$ 
	with moving frames 
	$$\{ \mathbf{t}_{c_\phi}(s), \mathbf{b}_{c_\phi}(s), \mathbf{n}_{c_\phi}(s) \}$$  
	and a surface 
	$$\tilde{\mathcal{H}} = T_\mathbf{C}(\mathcal{H})$$ 
	with moving frames 
	$$\{ \tilde{\mathbf{t}}_{c_\phi}(s), \tilde{\mathbf{b}}_{c_\phi}(s), \tilde{\mathbf{n}}_{c_\phi}(s) \}$$
	as defined in Definition \ref{def:movframes}. Then $\mathcal{H}$ induces a grating vector field on $\tilde{\mathcal{H}}$ where the induced grating vector field $\tilde{\mathbf{k}}_g$ is defined as
	\begin{equation}
	\begin{split}
	\tilde{\mathbf{k}}_g(x,y, \tilde{h}(x,y)) = & g_1 \tilde{\mathbf{t}}_{c_\phi(x,y)}(\sqrt{x^2 + y^2}) \\ 
	& + g_2 \tilde{\mathbf{b}}_{c_\phi(x,y)}(\sqrt{x^2 + y^2}) + g_3 \tilde{\mathbf{n}}_{c_\phi(x,y)}(\sqrt{x^2 + y^2})
	\end{split}   
	\end{equation}
	where $g_1, g_2, g_3$ are the coordinates of the grating vectors $\mathbf{k}_g$ at the preimage of $(x,y, \tilde{h}(x,y))$ in the moving frames $\{ \mathbf{t}_{c_\phi}(s), \mathbf{b}_{c_\phi}(s), \mathbf{n}_{c_\phi}(s) \}$, i.e. 
	\begin{equation}
	\begin{split}
	\mathbf{k}_g({T_\mathbf{C}}^{-1}(x,y, \tilde{h}(x,y))) = & g_1 \mathbf{t}_{c_\phi(x,y)}(\vert {T_\mathbf{C}}^{-1}(x,y, \tilde{h}(x,y))  \vert) \\
	& + g_2 \mathbf{b}_{c_\phi(x,y)}(\vert {T_\mathbf{C}}^{-1}(x,y, \tilde{h}(x,y))  \vert) \\
	& + g_3 \mathbf{n}_{c_\phi(x,y)}(\vert {T_\mathbf{C}}^{-1}(x,y, \tilde{h}(x,y))  \vert)
	\end{split}
	\end{equation}
	We say that the grating vector field of the deformed HOE $(\tilde{\mathcal{H}}, \tilde{\mathbf{k}}_g)$ is induced from $(\mathcal{H}, \mathbf{k}_g)$ under the transformation $T_\mathbf{C}$.
\end{proposition}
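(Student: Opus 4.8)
The statement is in essence an assertion of \emph{well-definedness}: it declares a rule that reads off the coordinates of $\mathbf{k}_g$ in the Darboux frame attached to a point of $\mathcal{H}$ and reinstalls them in the Darboux frame attached to the corresponding point of $\tilde{\mathcal{H}}$, and the content to verify is that this rule assigns to each $\tilde{\mathbf{r}} = (x,y,\tilde{h}(x,y)) \in \tilde{\mathcal{H}}$ exactly one vector. Accordingly, the plan is to check three things: (i) that $T_\mathbf{C}$ furnishes a bijective point correspondence, so that each $\tilde{\mathbf{r}}$ has a unique preimage; (ii) that $T_\mathbf{C}$ preserves the polar angle $\phi$, so that the index $\phi$ labelling the curves $c_\phi$ on $\mathcal{H}$ and on $\tilde{\mathcal{H}}$ refers to genuinely corresponding curves; and (iii) that at every point the triple $\{\mathbf{t}_{c_\phi}, \mathbf{b}_{c_\phi}, \mathbf{n}_{c_\phi}\}$ of Definition \ref{def:movframes} is an orthonormal basis of $\mathbb{R}^3$, so that the readout coordinates $g_1, g_2, g_3$ are unambiguous.

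For (i) I would simply invoke Proposition \ref{prop:1to1}: for $T_\mathbf{C} \in \mathcal{T}_{adm}$ the map $T_\mathbf{C}\colon D_\mathbf{C} \to \tilde{\mathcal{H}}$ is bijective, so ${T_\mathbf{C}}^{-1}(x,y,\tilde{h}(x,y))$ is a single well-defined point of $\mathcal{H}$. For (ii) the key observation is that the centre of projection $\mathbf{C}$ lies on the $z$-axis (cf. Definition \ref{def:centralproj}); hence the segment $\overline{\mathbf{C}\mathbf{r}}$ is contained in the half-plane spanned by the $z$-axis and the ray from the origin through $\mathbf{r} = (x,y,0)$. Since $\Gamma_{\tilde{h}}$ is rotationally symmetric about the $z$-axis, its intersection with this half-plane is exactly the curve $c_\phi$ with $\phi = \phi(x,y)$, so $T_\mathbf{C}(\mathbf{r})$ carries the same polar angle as $\mathbf{r}$; the orthogonal projection $T_{\infty}$ behaves in the same way, as the limiting case $C \to \infty$. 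This is what guarantees that the two frames appearing in the statement are indexed consistently, even though the radial parameter differs between the two sides: on $\tilde{\mathcal{H}}$ the frame is evaluated at $s = \sqrt{x^2+y^2}$, while on $\mathcal{H}$ it is evaluated at $s = |{T_\mathbf{C}}^{-1}(x,y,\tilde{h}(x,y))|$.

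For (iii) I would verify orthonormality directly from Definition \ref{def:movframes}. Writing $c_\phi(s) = (s\cos\phi, s\sin\phi, \tilde{h}(s\cos\phi, s\sin\phi))$, the velocity $\tfrac{\mathrm{d}}{\mathrm{d}s} c_\phi$ has horizontal part $(\cos\phi, \sin\phi)$ and vertical part equal to the directional derivative $\cos\phi\,\partial_x\tilde{h} + \sin\phi\,\partial_y\tilde{h}$; pairing this with $(\partial_x\tilde{h}, \partial_y\tilde{h}, -1)$ cancels term by term, so $\mathbf{t}_{c_\phi}\cdot\mathbf{n}_{c_\phi}=0$. Both vectors are unit by their normalisations, and since the horizontal part of the velocity never vanishes the curve is regular and $\mathbf{t}_{c_\phi}$ is everywhere defined; setting $\mathbf{b}_{c_\phi} = \mathbf{t}_{c_\phi}\times\mathbf{n}_{c_\phi}$ then completes $\{\mathbf{t}_{c_\phi}, \mathbf{b}_{c_\phi}, \mathbf{n}_{c_\phi}\}$ to an orthonormal basis of $\mathbb{R}^3$. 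The same computation applies verbatim with $\tilde{h} \equiv 0$ to the planar HOE $\mathcal{H}$, giving the orthonormal frame there.

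Combining these, the coordinates $g_i = \mathbf{k}_g({T_\mathbf{C}}^{-1}(\tilde{\mathbf{r}}))\cdot(\text{$i$-th frame vector at }{T_\mathbf{C}}^{-1}(\tilde{\mathbf{r}}))$ are uniquely determined, and substituting them into the defining formula produces a single vector $\tilde{\mathbf{k}}_g(\tilde{\mathbf{r}})$; ranging over all $\tilde{\mathbf{r}} \in \tilde{\mathcal{H}}$ yields the induced field, whose regularity is inherited from the $\mathcal{C}^2$ assumption on $\tilde{h}$ and the smoothness of the frame construction away from the axis. The only genuinely delicate point is (ii): the entire construction hinges on the projection preserving $\phi$, so that the two families of Darboux frames are matched curve by curve, and I expect this angle-preservation --- rather than any computation --- to be the step that carries the real weight of the argument.
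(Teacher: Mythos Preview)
Your analysis is correct and considerably more thorough than the paper's treatment. In fact, the paper provides \emph{no proof} for this proposition at all: despite being labelled a proposition, it functions in the paper purely as a definition of the induced grating vector field, and the author moves directly to a subsequent remark and theorem without any justification. The theorem that follows it is proved in one line by citing Definition~\ref{def:movframes} and this proposition, so the paper effectively treats the construction as self-evidently well-defined.

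What you have supplied is precisely the well-definedness argument that the paper omits. Your three ingredients---bijectivity via Proposition~\ref{prop:1to1}, preservation of the polar angle $\phi$ because $\mathbf{C}$ lies on the $z$-axis, and orthonormality of the Darboux frame from Definition~\ref{def:movframes}---are exactly what one needs, and your identification of the $\phi$-preservation as the step carrying the real weight is apt: it is implicit in the paper's setup (rotational symmetry plus centre of projection on the axis) but never spelled out. So your proof is not merely a different route; it is the only route on offer, filling a gap the paper leaves open.
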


\begin{remark}
	Note that our framework allows to further augment our model to include shrinkage or stretching effects due to chemical reactions or mechanical stresses by globally or locally rescaling the grating vector field. In general, deforming a planar HOE in to a curved surface with non-zero Gaussian curvature, such as a sphere segment, will result in (local) stretching, if the surface is deformed smoothly, i.e. avoiding wrinkles and tears. For a more accurate system model, these effects need to be better understood at the material and component level. This is part of on-going and future research which we intend to present in a subsequent publication. It appears that in some applications, e.g spherical wave HOEs for RSD combiners, these effects may be neglected in a simplified first order model, which still yields valuable insights into the optical effects arising from the changes in the macroscopic geometry alone.
\end{remark}

With the formalism developed above, we can provide an answer to Problem \ref{pblm:fw} as follows:

\begin{theorem}
	Let $(\mathcal{H}, \mathbf{k}_g)$ be a planar HOE and $T_\mathbf{C} \in \mathcal{T}_{adm}$ a transformation which maps $\mathcal{H}$ one-to-one to a surface $\tilde{\mathcal{H}}$. Let $\tilde{\mathbf{k}}_g$ be the vector field on $\tilde{\mathcal{H}}$ by $(\mathcal{H}, \mathbf{k}_g)$  and $T_\mathbf{C}$. Then, by construction, for any point $\mathbf{r} \in \tilde{\mathcal{H}}$ the local grating vector $\tilde{\mathbf{k}}_g(\mathbf{r})$ has the same length as the grating vector at its preimage. The diffraction of an incident probe wave with wave vector $\mathbf{k}_p$ can thus be computed using the local grating and surface normal vector in combination with a diffraction theory for volume gratings.
\end{theorem}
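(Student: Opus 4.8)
The plan is to lean on the phrase \emph{by construction}: the length-preservation claim should be essentially immediate once one confirms that the moving frames of Definition \ref{def:movframes} are genuine orthonormal bases of $\mathbb{R}^3$ at every point. First I would record that, by \eqref{eq:frametangent} and \eqref{eq:framenormal}, the tangent vector $\mathbf{t}_{c_\phi}(s)$ and the normal vector $\mathbf{n}_{c_\phi}(s)$ are unit vectors by their normalization. Since $\mathbf{t}_{c_\phi}(s)$ is tangent to the curve $c_\phi$ lying on $\Gamma_{\tilde{h}}$ whereas $\mathbf{n}_{c_\phi}(s)$ is the surface normal, the two are orthogonal. The binormal $\mathbf{b}_{c_\phi}(s) = \mathbf{t}_{c_\phi}(s) \times \mathbf{n}_{c_\phi}(s)$ defined in \eqref{eq:framebinormal} is then automatically a unit vector orthogonal to both, so $\{ \mathbf{t}_{c_\phi}(s), \mathbf{b}_{c_\phi}(s), \mathbf{n}_{c_\phi}(s) \}$ is orthonormal. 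The identical argument applies verbatim to the deformed frame $\{ \tilde{\mathbf{t}}_{c_\phi}(s), \tilde{\mathbf{b}}_{c_\phi}(s), \tilde{\mathbf{n}}_{c_\phi}(s) \}$ built on $\tilde{\mathcal{H}}$.

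Next I would invoke Proposition \ref{prop:inducedframes}, which defines $\tilde{\mathbf{k}}_g(\mathbf{r})$ to carry exactly the same coordinate triple $(g_1, g_2, g_3)$ in the deformed frame as $\mathbf{k}_g$ carries in the original frame at the preimage $T_\mathbf{C}^{-1}(\mathbf{r})$. Because the coordinate map attached to an orthonormal basis is a linear isometry, the Euclidean norm depends only on the coordinates, whence
\begin{equation}
\vert \tilde{\mathbf{k}}_g(\mathbf{r}) \vert^2 = g_1^2 + g_2^2 + g_3^2 = \vert \mathbf{k}_g(T_\mathbf{C}^{-1}(\mathbf{r})) \vert^2 ,
\end{equation}
which is the asserted length preservation. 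The bijectivity granted by Proposition \ref{prop:1to1} guarantees that the preimage $T_\mathbf{C}^{-1}(\mathbf{r})$ is well defined for every $\mathbf{r} \in \tilde{\mathcal{H}}$, so the identity holds pointwise on all of $\tilde{\mathcal{H}}$.

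For the diffraction assertion I would simply observe that the pair $(\tilde{\mathbf{k}}_g(\mathbf{r}), \tilde{\mathbf{n}}_{c_\phi}(\mathbf{r}))$ supplies precisely the local data demanded by the volume-grating models reviewed in Section \ref{subsec:HOEtheory}: taking $\tilde{\mathbf{n}}_{c_\phi}$ as the normal axis of a local Cartesian frame, the diffracted wave vector follows from k-vector closure \eqref{eq:KVCM} together with the energy-conservation constraint \eqref{eq:energycons} through the component equations \eqref{eq:kdtangent1}--\eqref{eq:kdnormal}, while the diffraction efficiency comes from any chosen coupled-wave or RCWA model. The preserved magnitude $\vert \tilde{\mathbf{k}}_g \vert$ is exactly what keeps the local grating period $\Lambda = 2\pi / \vert \tilde{\mathbf{k}}_g \vert$ of \eqref{eq:vgratingperiod} meaningful, so that the local volume-grating approximation remains self-consistent after deformation.

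I expect no serious analytic difficulty here, since the statement is deliberately framed to be immediate from the construction in Proposition \ref{prop:inducedframes}. The single point that warrants explicit care is the orthonormality of the Darboux-type frames, and in particular the orthogonality $\mathbf{t}_{c_\phi} \perp \mathbf{n}_{c_\phi}$ coming from the tangent-versus-normal distinction; once that is established, the isometry argument closes the first claim and the second reduces to feeding the local grating and normal vectors into the previously discussed diffraction machinery.
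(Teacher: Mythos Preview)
Your proposal is correct and follows essentially the same approach as the paper, which simply states that the result follows from Definition \ref{def:movframes} and Proposition \ref{prop:inducedframes}. You have merely unpacked what ``by construction'' means---orthonormality of the moving frames plus identical coordinate triples yielding equal Euclidean norms---whereas the paper leaves this implicit in a single sentence.
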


\begin{proof}
	The statement follows from Definition \ref{def:movframes} and Proposition \ref{prop:inducedframes}.
\end{proof}

In the following section, we will use the above framework to address Problem \ref{pblm:inv} in an analogous manner.

\subsection{The inverse problem for curved HOE in terms of moving frames and grating vector fields} \label{sec:inv-prblm}
We turn our attention to the inverse problem stated in Problem \ref{pblm:inv}. In our newly developed framework, Problem \ref{pblm:inv} consists of finding for a given recording geometry $\mathcal{H}$, transformation $T_{\mathbf{C}} \in \mathcal{T}_a$, and target HOE $(\tilde{\mathcal{H}}, \tilde{\mathbf{k}}_g)$ a grating vector field $\mathbf{k}_g$ such that the vector field induced by $(\mathcal{H}, \mathbf{k}_g)$ and $T_{\mathbf{C}}$ on $\tilde{\mathcal{H}}$ coincides with $\tilde{\mathbf{k}}_g$.

Based on the formalisms developed in Section \ref{subsec:fw-prblm}, we can also express an answer to Problem \ref{pblm:inv} in terms of admissible transformations and induced grating vector fields.

\begin{theorem}
	Let $(\tilde{\mathcal{H}}, \tilde{\mathbf{k}}_g)$ be a curved HOE and $T_\mathbf{C} \in \mathcal{T}_{adm}$ a transformation which maps a bounded domain $ \mathcal{H}  \subset \mathbb{R}^2 \times \{ 0 \}$ to $\tilde{\mathcal{H}}$. Let $\mathbf{k}_g$ be the vector field induced on $ \mathcal{H}$ by $(\tilde{\mathcal{H}}, \tilde{\mathbf{k}}_g)$ and $T_\mathbf{C}$. Then for any point $\mathbf{r} \in \mathcal{H}$ on the local grating vector $\mathbf{k}_g(\mathbf{r})$ has the same length as the grating vector at its preimage. Morevover, the vector field induced on  $\tilde{\mathcal{H}}$ by $(\mathcal{H}, \mathbf{k}_g)$ and $T_\mathbf{C}$ coincides with $\tilde{\mathbf{k}}_g$.
\end{theorem}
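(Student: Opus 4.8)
The plan is to exploit the fact that the induction construction of Proposition~\ref{prop:inducedframes} is symmetric in the roles of the two surfaces: it transfers the coordinate triple $(g_1, g_2, g_3)$ of a grating vector from one orthonormal frame to the corresponding orthonormal frame under the pointwise correspondence supplied by $T_\mathbf{C}$. The reverse induction appearing in the statement is nothing but this same construction with the planar and curved frames interchanged, so both the length claim and the consistency claim will reduce to the orthonormality of the Darboux frames of Definition~\ref{def:movframes}.

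First I would invoke Proposition~\ref{prop:1to1} to guarantee that $T_\mathbf{C}: \mathcal{H} \to \tilde{\mathcal{H}}$ is a bijection, so that every $\tilde{\mathbf{r}} \in \tilde{\mathcal{H}}$ has a unique preimage $\mathbf{r} = {T_\mathbf{C}}^{-1}(\tilde{\mathbf{r}}) \in \mathcal{H}$ and the assignment of frames is well defined in both directions. Writing $s = \sqrt{x^2 + y^2}$ and $\phi = \phi(x,y)$ for $\mathbf{r} = (x,y,0)$, I would expand $\tilde{\mathbf{k}}_g(\tilde{\mathbf{r}}) = g_1 \tilde{\mathbf{t}}_{c_\phi}(s) + g_2 \tilde{\mathbf{b}}_{c_\phi}(s) + g_3 \tilde{\mathbf{n}}_{c_\phi}(s)$; since this frame is orthonormal, the coefficients $g_1, g_2, g_3$ are uniquely determined as the orthogonal projections of $\tilde{\mathbf{k}}_g(\tilde{\mathbf{r}})$ onto the three frame vectors. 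The reverse-induced planar grating vector is then $\mathbf{k}_g(\mathbf{r}) = g_1 \mathbf{t}_{c_\phi}(s) + g_2 \mathbf{b}_{c_\phi}(s) + g_3 \mathbf{n}_{c_\phi}(s)$, using the same triple.

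For the length statement, I would note that orthonormality of $\{\mathbf{t}_{c_\phi}(s), \mathbf{b}_{c_\phi}(s), \mathbf{n}_{c_\phi}(s)\}$ gives $|\mathbf{k}_g(\mathbf{r})|^2 = g_1^2 + g_2^2 + g_3^2$, while orthonormality of the deformed frame yields the identical expression for $|\tilde{\mathbf{k}}_g(\tilde{\mathbf{r}})|^2$, so the two lengths agree at the corresponding points. For the consistency statement, I would apply the forward induction of Proposition~\ref{prop:inducedframes} to the planar HOE $(\mathcal{H}, \mathbf{k}_g)$: extracting the coordinates of $\mathbf{k}_g(\mathbf{r})$ in the planar orthonormal frame returns precisely $(g_1, g_2, g_3)$, and re-assembling these in the deformed frame produces $g_1 \tilde{\mathbf{t}}_{c_\phi}(s) + g_2 \tilde{\mathbf{b}}_{c_\phi}(s) + g_3 \tilde{\mathbf{n}}_{c_\phi}(s) = \tilde{\mathbf{k}}_g(\tilde{\mathbf{r}})$, so the field induced back on $\tilde{\mathcal{H}}$ coincides with $\tilde{\mathbf{k}}_g$ at every point.

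The point requiring care — rather than a deep obstacle — is ensuring that coordinate extraction and re-assembly are exact mutual inverses at each point, which hinges entirely on the frames being genuinely orthonormal, so that coefficients are recovered by orthogonal projection with no cross terms. Once this is secured through Definition~\ref{def:movframes}, the remainder is pointwise bookkeeping carried along the bijection $T_\mathbf{C}$ of Proposition~\ref{prop:1to1}, and no further analytic input is needed.
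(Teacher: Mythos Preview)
Your argument is correct and follows exactly the route the paper takes: the paper's proof simply cites Proposition~\ref{prop:1to1}, Definition~\ref{def:movframes}, and Proposition~\ref{prop:inducedframes}, and your proposal is a faithful unpacking of how those three ingredients combine (bijectivity of $T_\mathbf{C}$, orthonormality of the moving frames giving length preservation, and the coordinate-transfer construction being its own inverse). There is nothing to add or correct.
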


\begin{proof}
	The statement follows from Proposition \ref{prop:1to1}, Definition \ref{def:movframes} and Proposition \ref{prop:inducedframes}.
\end{proof}	

\begin{remark}
	Note that the geometric framework developed above does not provide a canonical choice for a transformation in the admissible set $\mathcal{T}_{a}$, although first numerical experiments show promising results for the orthogonal projection $T_{\infty}$ from Remark \ref{rmk:orthoproj}. Moreover, given a compact planar HOE surface $\mathcal{H}$ and a compact curved HOE surface $\tilde{\mathcal{H}}$, it is not clear if there always exist an admissible transform in  $\mathcal{T}_{adm}$ which maps $\mathcal{H}$ to $\tilde{\mathcal{H}}$  and vice verse. However, given a curved HOE surface $\tilde{\mathcal{H}}$ and an admissible transformation $T_{\mathbf{C}}$ we can always define an associated planar HOE surface $\mathcal{H} = {T_{\mathbf{C}}}^{-1}(\tilde{\mathcal{H}})$.
\end{remark}

%% ------------
\section{Application to HOE-embedding in prescription lenses for retina scanner AR-displays} \label{sec:applications}
In this section, we give a brief account of numerical experiments performed with a Matlab implementation of the framework developed in Section \ref{sec:fw-prblm}. For simplicity, this first prototpye only used the very basic k-vector closure method given in \eqref{eq:KVCM} to determine the wave vector of the diffracted light. Diffraction efficiencies were assumed be $100\%$, which is a very simplistic and physically inaccurate diffraction theory, which is nonetheless sometimes used  in optical simulation if only the propagation direction is of interest, or if on-Bragg diffraction is guaranteed by the system setup, in order to simplify the implementation; see for example the hologram surfaces implemented in Zemax OpticStudio. However, more advanced diffraction theories will be inlcuded in a more elaborate Python implementation which is currently under development.

%%% ------------
\subsection{Qualitative experimental validation: Curving a plane wave HOE (volume grating) for lens-embedding}
Figure \ref{fig:matlabHOE} shows a visualization of the simulation results of the diffraction of a plane wave at a planar and two curved HOEs. The planar HOE was recorded using two plane waves. One recording wave was indicent at $0 \deg$ with respect to the optical axis perpendicular to the HOE at its center point, the second recording wave was incident at an angle of about $65 \deg$ with respect to the optical axis. The left column of figures shows the propagation directions along the diffracted wave vectors (red) for plane wave illumination at the recording wavelength with a probe wave (green) incident at $65 \deg$ to the $z$-axis (the optical axis). The top graphic shows a close up of the HOE surface, part of the grating vector field (blue, rescaled for visualization) is also visible. The center column shows the results for an HOE recorded in a curved phototpolymer. Aside from the macrsocopic geometry, there is no difference to the planar HOE. The right column shows the result for an HOE that was deformed after recording. In particular the view in the bottom graphics -- where the propagation directions of the diffracted wavevectors have been scaled up for visualization purposes -- shows a significant change of the optical characteristics. Instead of a plane wave, the light is diffracted into a converging wave which also shows an astigmatic focal region between 6 and 8 cm along the $z$-axis.

\begin{figure}
	\centering
	\includegraphics[width=1.0\linewidth]{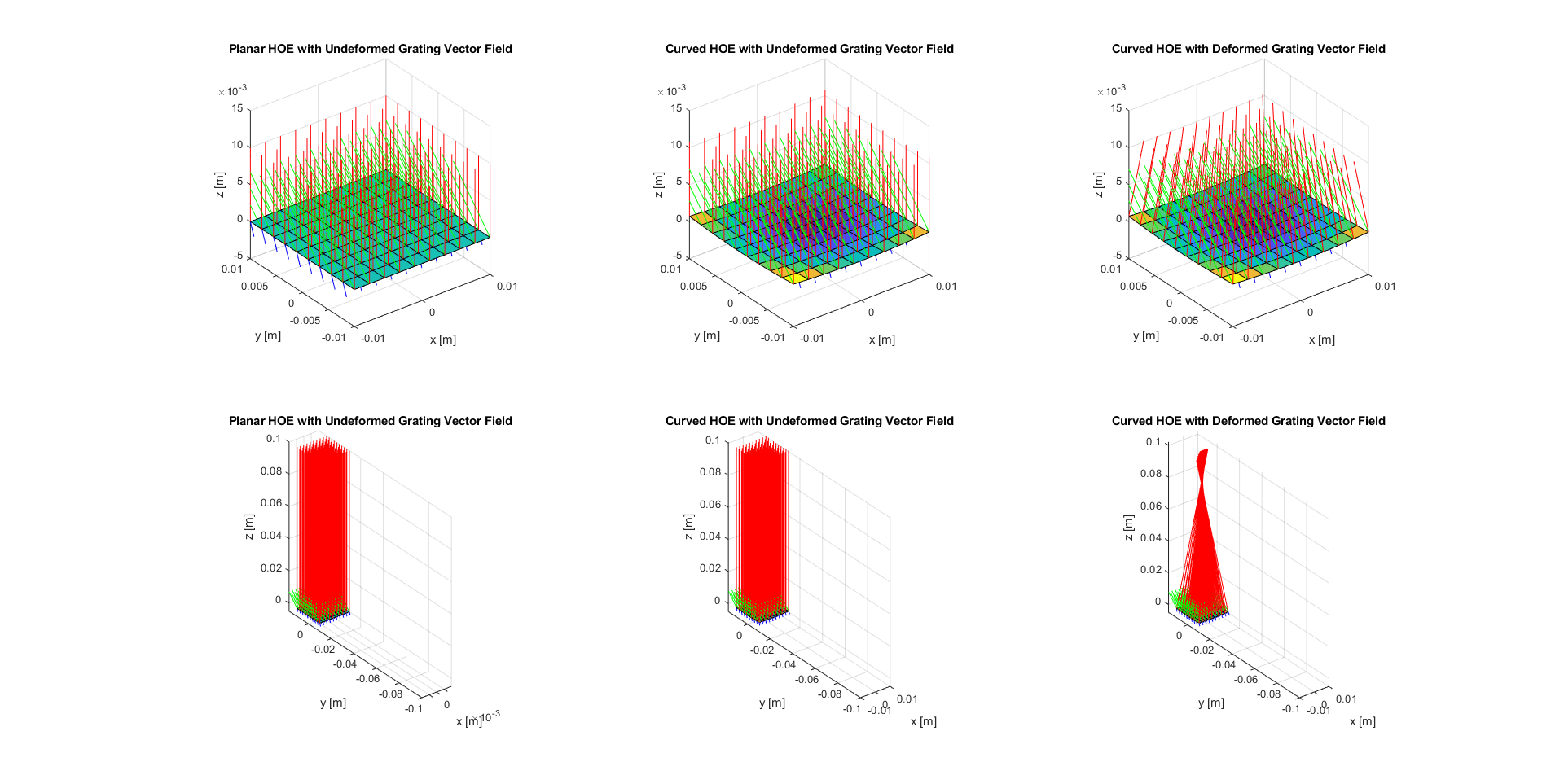}
	\caption{Simulation results for HOE reconstruction after different recording scenarios: Left: Diffracted wave (red) of a plane probe wave (green) incident at $65 \deg$ with respect to the surface normal at a planar HOE in close-up (Top) and an alternative view (Bottom). Center: Diffracted wave (red) of a plane probe wave (green) incident at $65 \deg$ with respect to the surface normal at the vertex of a spherically deformed HOE recored after the deformation in close-up (Top) and an alternative view (Bottom). Right: Diffracted wave (red) of a plane probe wave (green) incident at $65 \deg$ with respect to the surface normal at the vertex of a spherically deformed HOE recored prior to deformation in close-up (Top) and an alternative view (Bottom) showing a focusing effect and, at close inspection, an astigmatic focal region.} \label{fig:matlabHOE}
\end{figure}

For a first qualitative validation, the simulation results were compared with a simple measurement of a comparable volume grating that had been curved and embedded in a lens previously. Due to the available measurement setup, the HOE was replayed with a plane wave at $0\deg$ incidence with respect to the optical axis, i. e. in reverse compared to the simulation in Figure \ref{fig:matlabHOE}. The simulation results and the camera measurements of the cross section of the diffracted beam are shown in Figure \ref{fig:labHOE} and show both the same qualitative behavior with an astigmatic focal region in the diffracted beams.  

\begin{figure}
	\centering
	\includegraphics[width=1.0\linewidth]{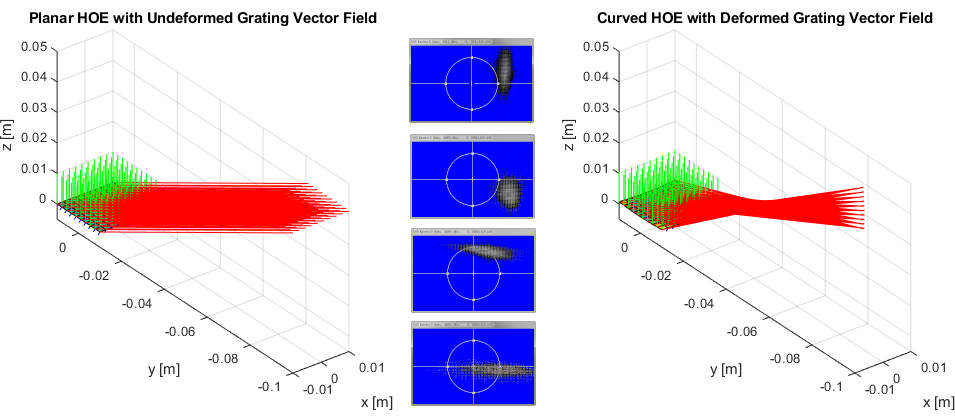} 
	\caption{Left: Simulation results for a planar volume grating showing a plane incident wave (green) and a plane diffracted wave (red); Right: Simulation results for a deformed volume grating (spherical deformation) showing a plane incident wave (green) and a diffracted wave (red) showing an astigmatic focal region. Center: Camera images showing the transverse intensity profile of the wave diffracted at acurved HOE along the propagation axis in a laboratory setup. The camera images at least qualitatively confirm the theoretically predicted astigmatism. (The measurements were facilitated by Stefanie Hartmann at CR/AEL2)} \label{fig:labHOE}
\end{figure}

\subsection{Consequences for HOE combiners in retina scanner displays recorded with spherical wave interference}
We end this section with abrief outlook on how the developed methods can be applied to the optical design of holographic combiners for RSDs. 

Figure \ref{fig:matlabRSDHOE} shows in the top sequence of graphics the simulation results for an HOE combiner recorded with a divergent and a convergent spherical wave, where the convergent wave converged towards a point source located on the z-axis through the HOE center. The left graphic shows the corresponding grating vector field (rescaled for visualization) in blue. the center graphic shows the directions of propagation of the local wave vectors of the diffracted wave when probed on-Bragg with the divergent recording wave. The right graphic shows the intersection points of the propagation directions of the diffracted wave vectors with different parallel detector planes which are distinguished by colour. 

The bottom sequence in Figure \ref{fig:matlabRSDHOE} shows the analogous simulation results for an HOE curved after recording. While the deformation of the grating vector field may not appear obvious at first glance, the propagation directions  after diffraction already show an impact on the focal region which is closer to the HOE vertex than before. Moreover, the visualization suggests again an astigmatic focal region. This impression is also confirmed by the interception points in the graphics on the right-hand side.  

\begin{figure}
	\centering
	\includegraphics[width=1.0\linewidth]{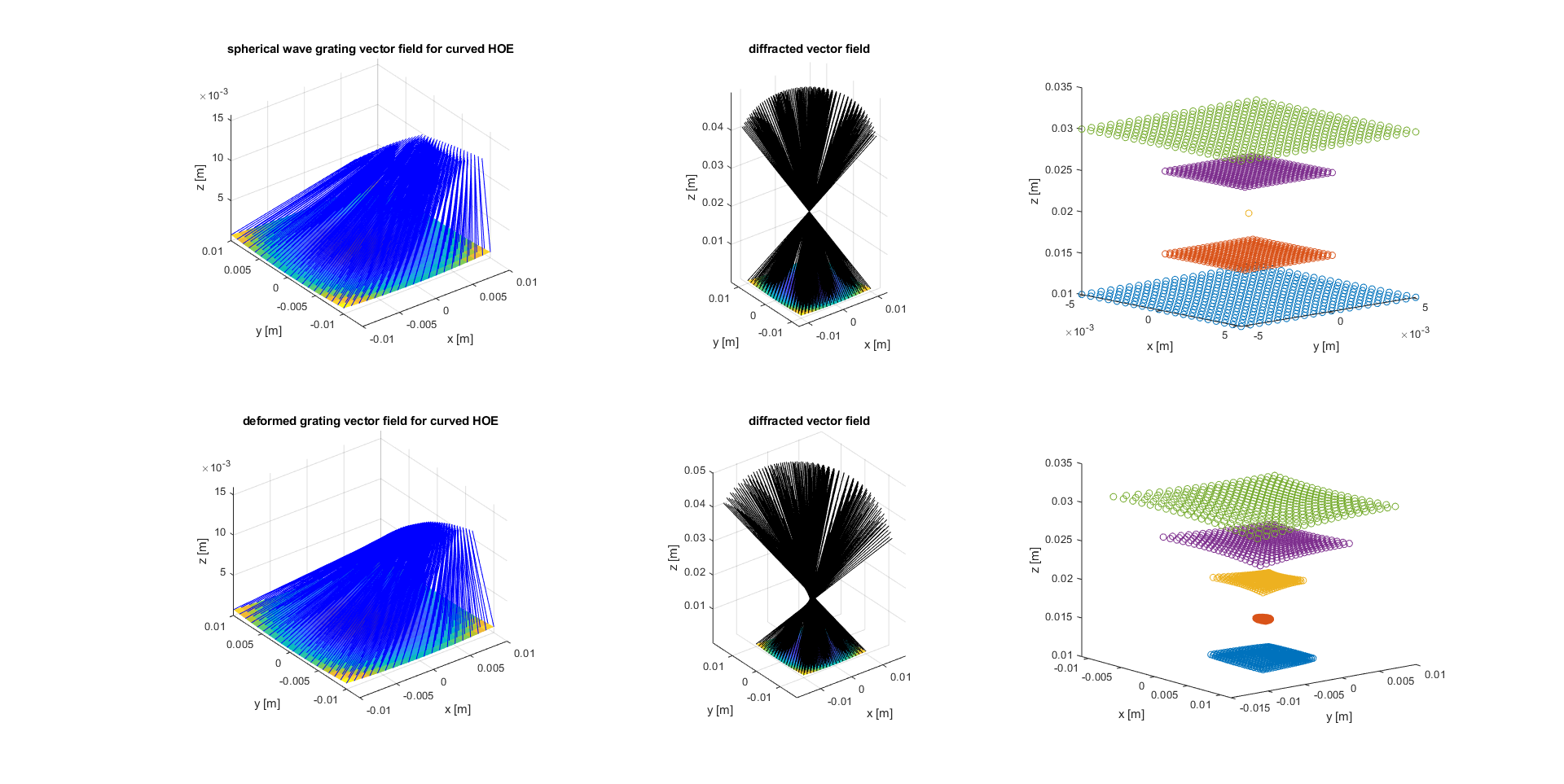} 
	\caption{Simulation resutls for a planar HOE (Top) and a curved HOE deformed after recording (Bottom): The left graphics show the grating vector fields (blue) rescaled for visualization purposes. The center graphics show the directions of propagation of the diffracted wave vectors (black), we observe that the focal region of the curved HOE is shifted in $z$-direction and astigmatic. The graphics on the right show point of interception of the propagaton directions of the diffract wave vectors with five parallel detector planes, confirming the focal shift and astigmatism for the curved HOE.} \label{fig:matlabRSDHOE}
\end{figure}

From the observations on Figure \ref{fig:matlabRSDHOE} we conclude, that the geometric framework developed in this paper is expected to prove helpful in the simulation and design of holographic combiners for RSDs which need to be curved after recording. Our method is expected to provide a systematic framework for design and optimization tools, e. g. to precompensate in the design process changes in the diffraction properties of an HOE due to deformations of the HOE geometry after recording.

As an outlook for future applications of this approach, we mention here that based on the above results, the Matlab simulation was extended to implement a parametric optimization for curved HOE created from spherical wave interference which were used successfully in a demonstrator setup of a RSD. Thus, we are confident that the framework presented in this paper offers a promising modelling and simulation approach for the optical design of curved HOE for RSD and will be investigated further.

%% ------------
\section{Conclusion} \label{sec:conclusion}
We have developed a geometric framework that is suitable for the investigation of optical effects arising due to a smooth deformation of the macroscopic HOE geometry after the recording process. We want to point out that the present investigation was developed independently and differs from previous work on curved HOE \cite{bang_curved_2019} by the fact that we do not focus in our investigation to cylinder segments as curved HOE surfaces. Recent work\cite{bang_curved_2019} focuses on ways to exploit the change in diffraction characteristics due to curvature in new system designs, in particular, the very interesting example of dynamic exit pupil expansion in a retina scanner display is discussed. The method presented in Section \ref{sec:fw-prblm} can be adapted to surfaces that are not rotationally symmetric to include for example cylindric deformation where the curved HOE still have zero Gaussian curvature as discussed in previous work \cite{bang_curved_2019}. However, the focus of our study was motivated by the need for a better understanding of HOEs deformed into geometries with nonzero Gaussian curvature after recording. In particular, we are interested in simulation methods to predict these effects and develop design methods to precompensate them e.g. during HOE recording. The assumption of the geometry to be rotationally symmetric is a reasonable constraint for spherical lenses. However, we are convinced that even moderate freeform lenses, e.g. due to corrections of astigmatisms, can be treated with our method. More importantly, however, we would like to point out that these generalizations are only required if the HOE is attached to an optically active surface, i.e. a boundary interface between two different media. Our method is suitable without modification even for not rotationally symmetric lenses, as long as the HOE is curved in a rotationally symmetric way. This is consistent with existing embedding processes\cite{korner_casting_2018}, where the curved HOE is fully embedded in the surrounding lens material. Thus, the assumption of rotational symmetry for this purely mechanical interface geometry does not in general limit the symmetry (or lack thereof) properties of the lens surfaces.  

As we already pointed out earlier, the present investigation does not describe the geometry of the wavefronts diffracted at the HOE as has been don in earlier work\cite{verboven_aberration_1986, peng_nonparaxial_1986} in order to derive aberration coefficients with respect to some reference wavefront for the case where the HOE is recorded on a curved substrate directly. However, our approach yields a method to describe how the optical characteristics of an HOE are affected, if its macroscopic geometry is changed after recording. Thus, we believe that the current approach may be combined with previous work on wavefront aberrations, since it yields the local ray directions for the curved HOE and thus should allow at least a numerical approach to a wavefront aberration calculation as discussed in earlier work\cite{verboven_aberration_1986, peng_nonparaxial_1986}.

The methodology presented in this paper has already been adapted successfully to the optical design and simulation of curved HOE prototypes used in table top demonstrators. Moreover, the framework opens new possibilities for further modelling and simulation aspects associated with the optical design and system architecture of retina scanner displays, which we plan to include in future investigations. To develop such methods, the author is convinced that modelling and simulation methods in optical engineering will profit from suitable mathematical tools, in particular from differential geometry and partial differential equations, to allow for variable levels of model complexity and fidelity depending on the requirements of the engineering task. On the other hand, the development of suitable mathematical models and numerical algorithms is also expected to profit from insights into the underlying physical phenomena and the optical design freedom as well as inherent constraints of novel optical components and materials.   

%% ------------
\section*{Acknowledgments}
The author would like to acknowledge the assistance of Dr. Stefanie Hartmann in facillitating the laboratory experiments. Further thanks go to the current and former holography and display colleagues of the Optoelectronics and Optical Systems Group at Bosch Corporate Research, the optical engineering colleagues at Bosch Sensortec GmbH, the organizers and participants of the Optimization and Simulation Workshop at Bosch Corporate Research, and to Hendrik Jansen and Dominik Hayd at PEA4-Fe for fruitful discussions on holography, display technology, as well as mathematical modelling and simulation methods. 

This manuscript has been accpeted in revised form for publication in SPIE \emph{Optical Engineering}.

\section*{Citation}
Tobias Graf, "Curved holographic optical elements from a geometric view point," Opt. Eng. 60(3) 035102 (4 March 2021) \href{https://doi.org/10.1117/1.OE.60.3.035102}{https://doi.org/10.1117/1.OE.60.3.035102}

\section*{Copyright notice}
Copyright 2021 Society of Photo-Optical Instrumentation Engineers. One print or electronic
copy may be made for personal use only. Systematic reproduction and distribution, duplication of any
material in this paper for a fee or for commercial purposes, or modification of the content of the paper
are prohibited.

%%%%% References %%%%%
\bibliography{Bib_curvedHOE}   % bibliography data in report.bib
\bibliographystyle{spiejour}   % makes bibtex use spiejour.bst

\end{spacing}
\end{document}